\newcommand\bmcal[1]{\bm{\mathcal{#1}}}
\DeclareMathSymbol{\mlq}{\mathord}{operators}{``}
\DeclareMathSymbol{\mrq}{\mathord}{operators}{`'}
\DeclareMathOperator*{\argmin}{argmin}
\newcommand\eqnum{\addtocounter{equation}{1}\tag{\theequation}}
\newtheorem{thm}{Theorem}
\begin{document}

%
% paper title
% Titles are generally capitalized except for words such as a, an, and, as,
% at, but, by, for, in, nor, of, on, or, the, to and up, which are usually
% not capitalized unless they are the first or last word of the title.
% Linebreaks \\ can be used within to get better formatting as desired.
% Do not put math or special symbols in the title.
\title{Prefix-Free Code Distribution Matching for Probabilistic Constellation Shaping}
%
%
% author names and IEEE memberships
% note positions of commas and nonbreaking spaces ( ~ ) LaTeX will not break
% a structure at a ~ so this keeps an author's name from being broken across
% two lines.
% use \thanks{} to gain access to the first footnote area
% a separate \thanks must be used for each paragraph as LaTeX2e's \thanks
% was not built to handle multiple paragraphs
%

\author{Junho~Cho,~\IEEEmembership{Member,~IEEE}
%        John~Doe,~\IEEEmembership{Fellow,~OSA,}
%        and~Jane~Doe,~\IEEEmembership{Life~Fellow,~IEEE}% <-this % stops a space
%\thanks{M. Shell was with the Department of Electrical and Computer Engineering, Georgia Institute of Technology, Atlanta, GA, 30332 USA e-mail: (see http://www.michaelshell.org/contact.html).}% <-this % stops a space
%\thanks{J. Doe and J. Doe are with Anonymous University.}% <-this % stops a space
%\thanks{Manuscript received April 19, 2005; revised August 26, 2015.}}
%\thanks{J. Cho is with Nokia Bell Labs, Holmdel, NJ, 07733 USA e-mail: junho.cho@nokia-bell-labs.com}
}% <-this % stops a space

% note the % following the last \IEEEmembership and also \thanks - 
% these prevent an unwanted space from occurring between the last author name
% and the end of the author line. i.e., if you had this:
% 
% \author{....lastname \thanks{...} \thanks{...} }
%                     ^------------^------------^----Do not want these spaces!
%
% a space would be appended to the last name and could cause every name on that
% line to be shifted left slightly. This is one of those "LaTeX things". For
% instance, "\textbf{A} \textbf{B}" will typeset as "A B" not "AB". To get
% "AB" then you have to do: "\textbf{A}\textbf{B}"
% \thanks is no different in this regard, so shield the last } of each \thanks
% that ends a line with a % and do not let a space in before the next \thanks.
% Spaces after \IEEEmembership other than the last one are OK (and needed) as
% you are supposed to have spaces between the names. For what it is worth,
% this is a minor point as most people would not even notice if the said evil
% space somehow managed to creep in.

% The paper headers
\markboth{IEEE Transactions on Communications}%
{Submitted paper}
% The only time the second header will appear is for the odd numbered pages
% after the title page when using the twoside option.
% 
% *** Note that you probably will NOT want to include the author's ***
% *** name in the headers of peer review papers.                   ***
% You can use \ifCLASSOPTIONpeerreview for conditional compilation here if
% you desire.

% If you want to put a publisher's ID mark on the page you can do it like
% this:
%\IEEEpubid{0000--0000/00\$00.00~\copyright~2015 IEEE}
% Remember, if you use this you must call \IEEEpubidadjcol in the second
% column for its text to clear the IEEEpubid mark.

% use for special paper notices
%\IEEEspecialpapernotice{(Invited Paper)}

% make the title areaadaptabilityt
\maketitle

% As a general rule, do not put math, special symbols or citations
% in the abstract or keywords.

\vspace{-4em}

\begin{abstract}
In this work, we construct variable-length prefix-free codes that are optimal (or near-optimal) in the sense that no (or few) other codes of the same cardinality can achieve a smaller expected energy per code symbol for the same resolution rate. Under stringent constraints of 4096 codewords or below per codebook, the constructed codes yield an energy per code symbol within a few tenths of a dB of the \emph{un}constrained theoretic lower bound, across a wide range of resolution rates with fine granularity. We also propose a framing method that allows variable-length codes to be transmitted using a fixed-length frame. The penalty caused by framing is studied using simulations and analysis, showing that the energy per code symbol is kept within 0.2~dB of the unconstrained theoretic limit for some tested codes with a large frame length. When the proposed method is used to implement probabilistic constellation shaping for communications in the additive white Gaussian noise channel, simulations show that between 0.21~dB and 0.98~dB of shaping gains are achieved relative to uniform 4-, 8-, 16- and 32-quadrature amplitude modulation.
\end{abstract}

% Note that keywords are not normally used for peerreview papers.
%\begin{IEEEkeywords}
%Prefix-free codes, variable-length codes, distribution matching, probabilistic constellation shaping.
%\end{IEEEkeywords}

% For peer review papers, you can put extra information on the cover
% page as needed:
% \ifCLASSOPTIONpeerreview
% \begin{center} \bfseries EDICS Category: 3-BBND \end{center}
% \fi
%
% For peerreview papers, this IEEEtran command inserts a page break and
% creates the second title. It will be ignored for other modes.
\IEEEpeerreviewmaketitle

%\noindent1) Introduction\\
%2) Problem formulation\\
%3) Code construction\\
%3-1) F2V Code\\
%3-2) F2V Code\\
%3-3) F2V Code\\
%4) Framing\\
%4-1) Method\\
%4-2) Monte-Carlo Simulation\\
%4-3) Gaussian Approximation\\
%5) Conclusion

\section{Introduction}
\label{sec:intro}
% The very first letter is a 2 line initial drop letter followed
% by the rest of the first word in caps.
% 
% form to use if the first word consists of a single letter:
% \IEEEPARstart{A}{demo} file is ....
% 
% form to use if you need the single drop letter followed by
% normal text (unknown if ever used by the IEEE):
% \IEEEPARstart{A}{}demo file is ....
% 
% Some journals put the first two words in caps:
% \IEEEPARstart{T}{his demo} file is ....
% 
% Here we have the typical use of a "T" for an initial drop letter
% and "HIS" in caps to complete the first word.
%\IEEEPARstart{P}{refix}-free codes are generally used in \emph{source coding} to transform \red{information symbols} of an arbitrary probability distribution to \red{code symbols} of a uniform probability distribution to maximizing entropy per \red{code symbol}.
\IEEEPARstart{P}{refix}-free codes are commonly used in lossless data compression to reduce the number of code symbols to describe information symbols, such that the original information symbols can be completely reconstructed from the code symbols.
Optimal prefix-free coding that achieves the minimum number of code symbols transforms information symbols of a non-uniform probability distribution into code symbols of a uniform probability distribution.
In this work, prefix-free codes perform a reverse operation called \emph{distribution matching (DM)}; namely, it transforms information symbols of a uniform probability distribution to code symbols of a desired probability distribution.
In particular, we restrict information symbols to bits $b\in\mathcal{B}:=\{0,1\}$ drawn with equal probabilities $\mathbb{P}(b=0) = \mathbb{P}(b=1)= 0.5$, and construct prefix-free codes with a unipolar $M$-ary amplitude shift keying ($M$-ASK) code alphabet $\mathcal{X}_{M\textrm{-ASK}} = \{1,3,\ldots,2M-1\}$.
Defining the \emph{resolution rate} as the expected number of information bits per code symbol\cite{Bocherer13fixed}, the constructed prefix-free codes have the minimum or near-minimum average symbol energy among all possible prefix-free codes of the same cardinality for the same resolution rate.
We also propose a \emph{framing} method, with which variable-length prefix-free codewords can always be contained in a fixed-length frame, thereby facilitating application of prefix-free codes to communications. 

A prominent application of \emph{prefix-free code distribution matching (PCDM)} is \emph{probabilistic constellation shaping (PCS)} for capacity-approaching communications.
PCS makes low-energy symbols in the code alphabet appear with a higher probability than high-energy symbols, thereby reducing the average transmit energy for the same \emph{information rate (IR)} \cite{Tunstall67,Varn71,Hu71,Forney84 ,Calderbank90,Forney92,Livingston92 ,Kschischang93,Raphaeli04,Kaimalettu07,Valenti12}.
%%%%%%
%\begin{figure}[!t]
%\centering
%\captionsetup[subfloat]{captionskip=0pt}
%%\subfloat[][]{\!\includegraphics[width=0.92\linewidth]{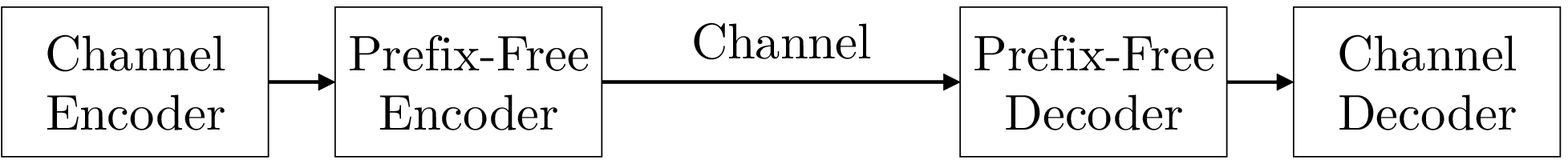}\label{fig:archi_a}}\\[0pt]
%%\subfloat[][]{\!\includegraphics[width=0.92\linewidth]{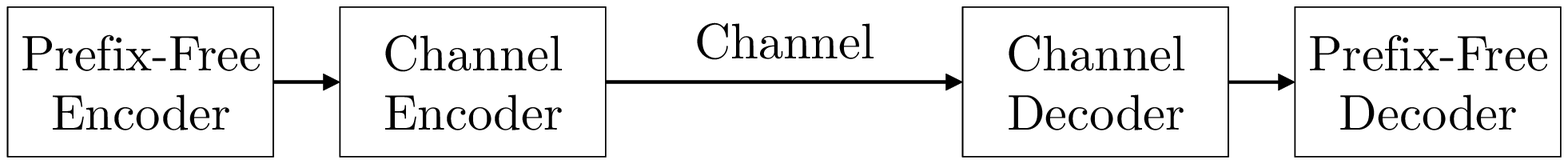}\label{fig:archi_b}}\\[0pt]
%%\subfloat[][]{\!\includegraphics[width=0.92\linewidth]{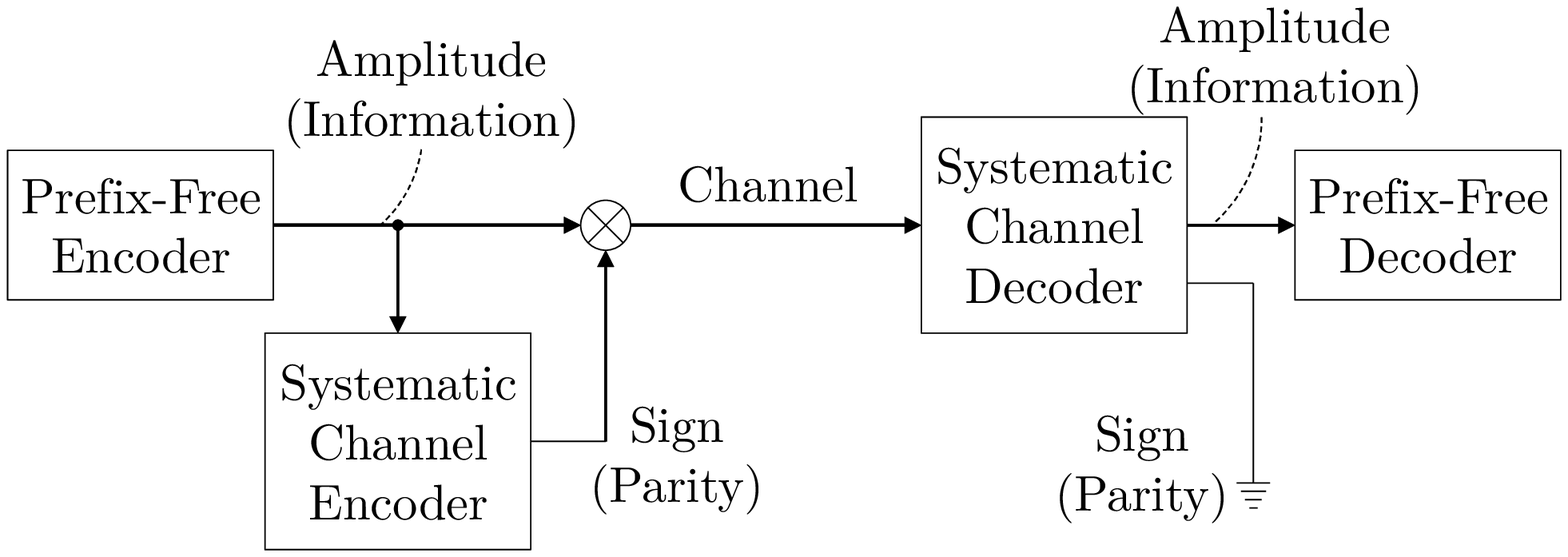}\label{fig:archi_c}}\\
%\subfloat[][]{\!\includegraphics[width=0.46\linewidth]{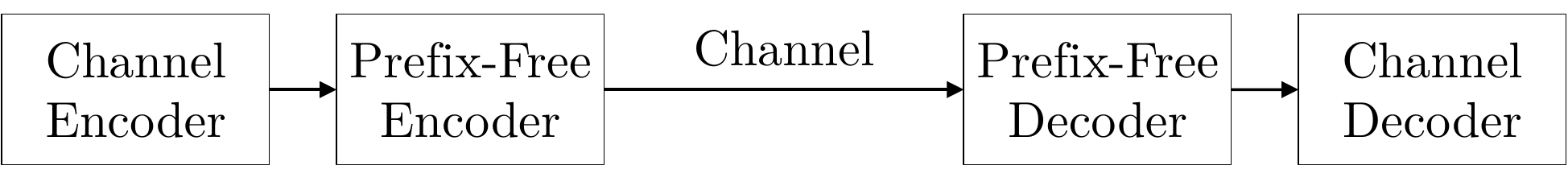}\label{fig:archi_a}} \qquad
%\subfloat[][]{\!\includegraphics[width=0.46\linewidth]{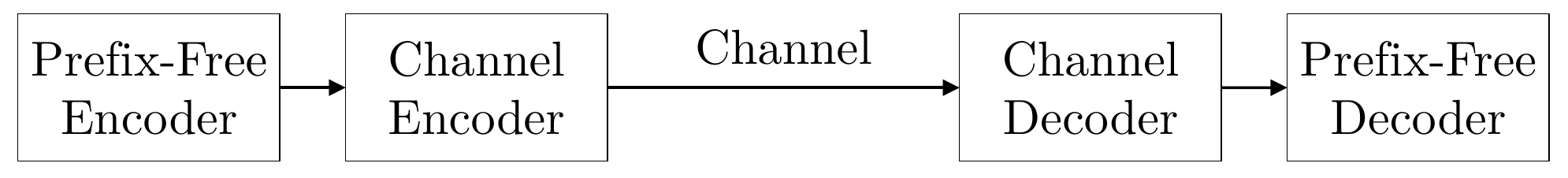}\label{fig:archi_b}}\\[0pt]
%\subfloat[][]{\!\includegraphics[width=0.46\linewidth]{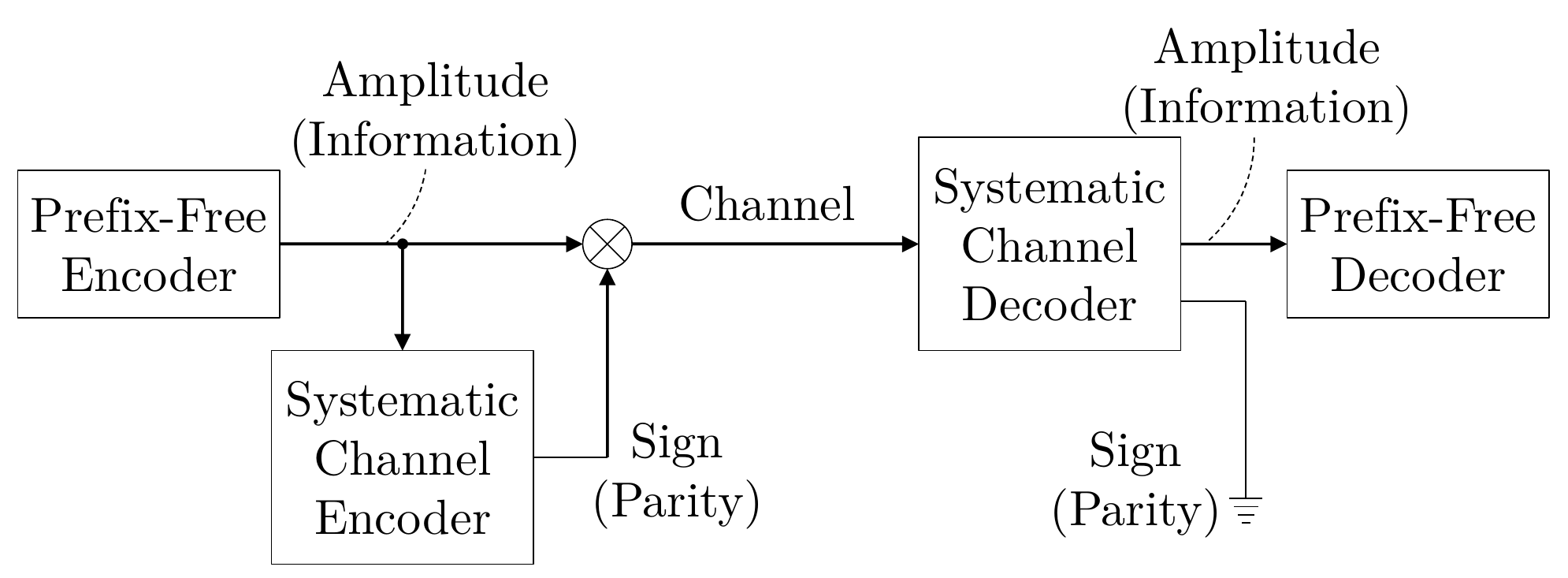}\label{fig:archi_c}}\\
%\caption{Architectures for a coded modulation system.}% 
%\label{fig:archi}
%\end{figure}
%%%%%%
A long-standing challenge of PCS has been the incorporation of forward error correction (FEC) coding and DM.
If DM is embedded between FEC encoding and decoding, errors occurred in the channel can be significantly boosted in the DM decoding process, thereby rendering the outer FEC coding unsuccessful.
For the case of variable-length DM, even a single error can cause insertion or deletion of information symbols, leading to synchronization errors or catastrophic propagation of errors.
In a reversed architecture where FEC coding is embedded between DM encoding and decoding, a symbol distribution formed by DM encoding is not preserved by the subsequent FEC encoding, since (linear) FEC encoding generates almost equiprobable code symbols from information bits of any distribution.
There have been approaches to jointly optimizing FEC and DM \cite{Forney92,Raphaeli04,Kaimalettu07,Valenti12}, but they lack rate adaptability and involve the design of customized FEC codes.
Recently, however, an architecture called \emph{probabilistic amplitude shaping (PAS)} \cite{Bocherer15} solved the problem.
In the PAS architecture, information bits are first encoded by DM to produce \emph{amplitudes} of transmit symbols with a desired probability distribution, then a following systematic FEC encoder generates parity bits that constitute \emph{signs} of the transmit symbols.
In this manner, the amplitude distribution remains unchanged by equiprobable parity bits, and the error propagation and synchronization errors do not occur since FEC decoding corrects errors before DM decoding.
PAS enables separate design of FEC and DM, hence allows for the use of off-the-shelf FEC codes in conjunction with independently optimized DM.
In this paper, the performance of the proposed PCDM in communication systems will be numerically evaluated using the PAS architecture in comparison with the conventional communication schemes whose transmit symbols are uniformly distributed.

%%%%%%%%%%%%%%%%%%%%%%%%%%%%%%%%%%%%%%%%%%%%%%%%%%%%%%%%%%%%%%%%%%%%%%%%%
%\section{Prefix-Free Codes}
%\label{sec:codes_intro}
\section{Prior Works}
\label{sec:prior_works}

\subsection{Fixed-Length DMs}

The $m$-out-of-$n$ coding~\cite{ramabadran1990coding} and the  constant composition DM (CCDM)~\cite{Schulte16} are fixed-to-fixed length (F2F) DMs that realize a target code symbol distribution on a block-by-block basis using arithmetic coding.
%Both schemes accommodate rate adaptation by creating an arbitrary symbol distribution. 
The CCDM was shown to be asymptotically optimal in the sense that the normalized Kullback-Leibler (KL) divergence between the desired and realized symbol distributions vanishes as the block length goes to infinity~\cite{Schulte16}.
The computational complexity of CCDM increases linearly with the block length, making it possible to use a very large block length to closely approach the asymptotic performance.
However, CCDM requires high-precision multiplications and divisions.
In~\cite{Fehenberger18multiset}, the multiset partitioning DM (MPDM) was proposed to enhance the degrading performance of CCDM in small blocks, by letting CCDM create different symbol distributions across multiple blocks that are averaged to the target distribution.
For a small block length, typically below 100 symbols, there have been algorithmic approaches to implement an F2F DM, such as the shell mapping~\cite{lang1989leech,laroia1994optimal,Schulte18shell} and the enumerative sphere shaping~\cite{Gultekin17}.
The algorithmic approaches substantially reduced the complexity of indexing a point in a sphere, by divide and conquer~\cite{lang1989leech,laroia1994optimal,Schulte18shell} or dynamic programming\cite{Gultekin17}, which is otherwise exponential in the dimension.
Nevertheless, their complexity is polynomial in the dimension, e.g., quadratic for the Algorithm 2 of~\cite{laroia1994optimal}, hence a large block length cannot be used.
More recently, an F2F DM was realized in~\cite{Yoshida18low} by using a distribution uniformizer that makes all the symbols in a code alphabet appear the same number of times in each block, followed by run-length coding that transforms the uniform distribution into a non-uniform target distribution.

%Note that the PCDM and CCDM have a linear complexity in the block length, making it possible to realize the asymptotic performance in a large block length, at the expense of a long latency.

%Both the shell mapping [Laroria, Lang and Longstaff, SMDM] and the enumerative amplitude shaping [Gultekin] are algorithmic approaches that significantly reduces the complexity of indexing and addressing a point in a sphere, which is otherwise exponential in the dimension.
%In [Gultekin], an enumerative amplitude shaping approach to realize an F2F DM was proposed, which significantly reduces the complexity of indexing and addressing a point in a sphere, which is otherwise exponential in the dimension.
%Nevertheless, the complexity still increases with the block length, hence this method cannot be used for large block lengths, e.g., greater than 100 symbols in a block.

\subsection{Variable-Length PCDM}

It was recently shown that F2F DMs like $m$-out-of-$n$~\cite{ramabadran1990coding} and CCDM~\cite{Schulte16} cannot fundamentally eliminate the \emph{un}-normalized KL divergence between the desired and realized symbol distributions~\cite{schulte2017divergence}.
On the other hand, \emph{variable-length} may achieve zero \emph{un}-normalized KL divergence~\cite{Bocherer13fixed}.
Importantly, variable-length PCDM has a linear complexity in the dimension, as in CCDM and the run-length coding~\cite{Yoshida18low}, hence can be realized in large dimensions to achieve zero un-normalized KL divergences.
Furthermore, unlike the $m$-out-of-$n$ or CCDM, PCDM does not require high-precision arithmetic.
To this end, we limit the scope to variable-length prefix-free codes and study the construction of prefix-free codes for rate-adaptable PCDM throughout the paper.

Let $\bm{b}$ and $\bm{x}$, respectively, denote a concatenation of an indefinite number of information symbols drawn from the binary alphabet $\mathcal{B}$ and a concatenation of an indefinite number of code symbols drawn from the ASK alphabet $\mathcal{X}$.
Then, a \emph{code} $\mathcal{C}: \bmcal{B} \mapsto \bmcal{X} $ defines a \emph{bijective} function that maps every information word $\bm{b}$ in the \emph{dictionary} $\bmcal{B}$ to a codeword $\bm{x}$ in the \emph{codebook} $\bmcal{X}$.
If the codeword $\bm{x} \in \bmcal{X}$ is an ordered concatenation of two words $\bm{x}_{\mathrm{pre}}$ and $\bm{x}'$, then $\bm{x}_{\mathrm{pre}}$ is called a \emph{prefix} of the word $\bm{x}$.
A code is a \emph{prefix-free code} if any codeword $\bm{x} \in \bmcal{X}$ is not a prefix of another codeword.
A prefix-free code is an \emph{instantaneously decodable code}, whose decoding can be performed as immediately as a codeword is found on successive receipt of the code symbols.

In a special case where all code symbols in $\mathcal{X}$ have equal transmit energies, and if the information word distribution is known a priori, \emph{Huffman codes} represent optimal prefix-free codes in the sense that no other codes can produce a shorter expected codeword length, hence a smaller average code symbol energy.
In a more general case where the energies of symbols in $\mathcal{X}$ are not necessarily equal, the prefix-free code construction problem has been addressed in the context of PCS~\cite{Abrahams98}, using \emph{Lempel-Even-Cohn (LEC)} coding~\cite{Lempel73} and \emph{Varn} coding~\cite{Varn71}.
LEC coding solves the problem when all codewords have an equal length.
LEC coding is isomorphic to Huffman coding if we convert the energy of a codeword into the probability of the corresponding information word.
A Huffman codebook therefore becomes the LEC dictionary for parsing information bits, mapped to an LEC codebook that consists of equal-length codewords, completing a \emph{variable-to-fixed length (V2F)} code.
However, while we want to minimize the energy per \emph{code} symbol for arbitrary resolution rates, LEC coding minimizes the energy per \emph{information} symbol, and its resolution rate is determined by the Huffman codebook that is given \emph{a priori}.
Varn coding provides another solution when information words have an equal length, hence constructing \emph{fixed-to-variable length (F2V)} codes.
However, Varn coding requires a non-uniform probability mass function (PMF) for the equal-length information words, which must be given \emph{a priori} to realize a particular resolution rate.
Another approach to constructing F2V codes is by minimizing the KL divergence between the desired and realized PMFs for a binary code alphabet \cite{Amjad13}, which also assumes \emph{a priori} knowledge of the optimal PMF for the target resolution rate.
In general cases where both information words and codewords can have unequal lengths, there are known algorithms to construct optimal prefix-free codes \cite{Hu71,Choi96,Golin96,Golin98,Golin08} using, e.g., dynamic programming \cite{Golin98,Golin08}.
However, they require unequal probabilities of the information symbols that must be given \emph{a priori}, and the resulting codes are optimal only in the sense that they minimize the average \emph{codeword} energy, without taking into account the resolution rate.
To the best of our knowledge, none of the existing approaches addressed the problem of constructing prefix-free codes to minimize the average code symbol energy for arbitrary target resolution rates.

%%%%%%%%%%%%%%%%%%%%%%%%%%%%%%%%%%%%%%%%%%%%%%%%%%%%%%%%%%%%%%%%%%%%%%%%%
\section{Prefix-Free Codes for Various Resolution Rates}
\label{sec:codes}

%%%
\begin{table}[!t]
\caption{Examples of prefix-free codes for 2-ASK code alphabet}
\label{tab:examples}
\centering
\vspace*{-1em}
\includegraphics[width=3.35in]{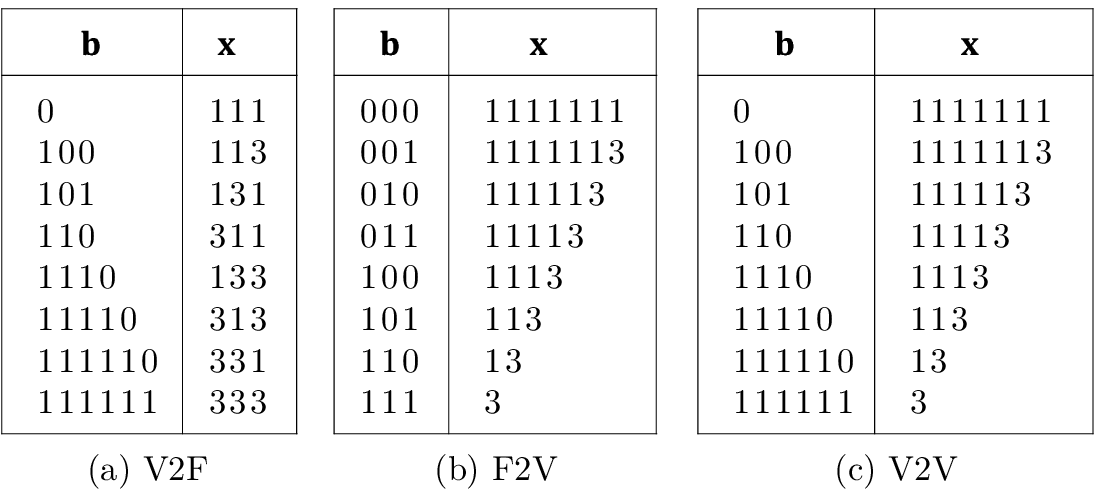}
\end{table}
%%%
\begin{figure}[!t]
\centering
\includegraphics[width=3.3in]{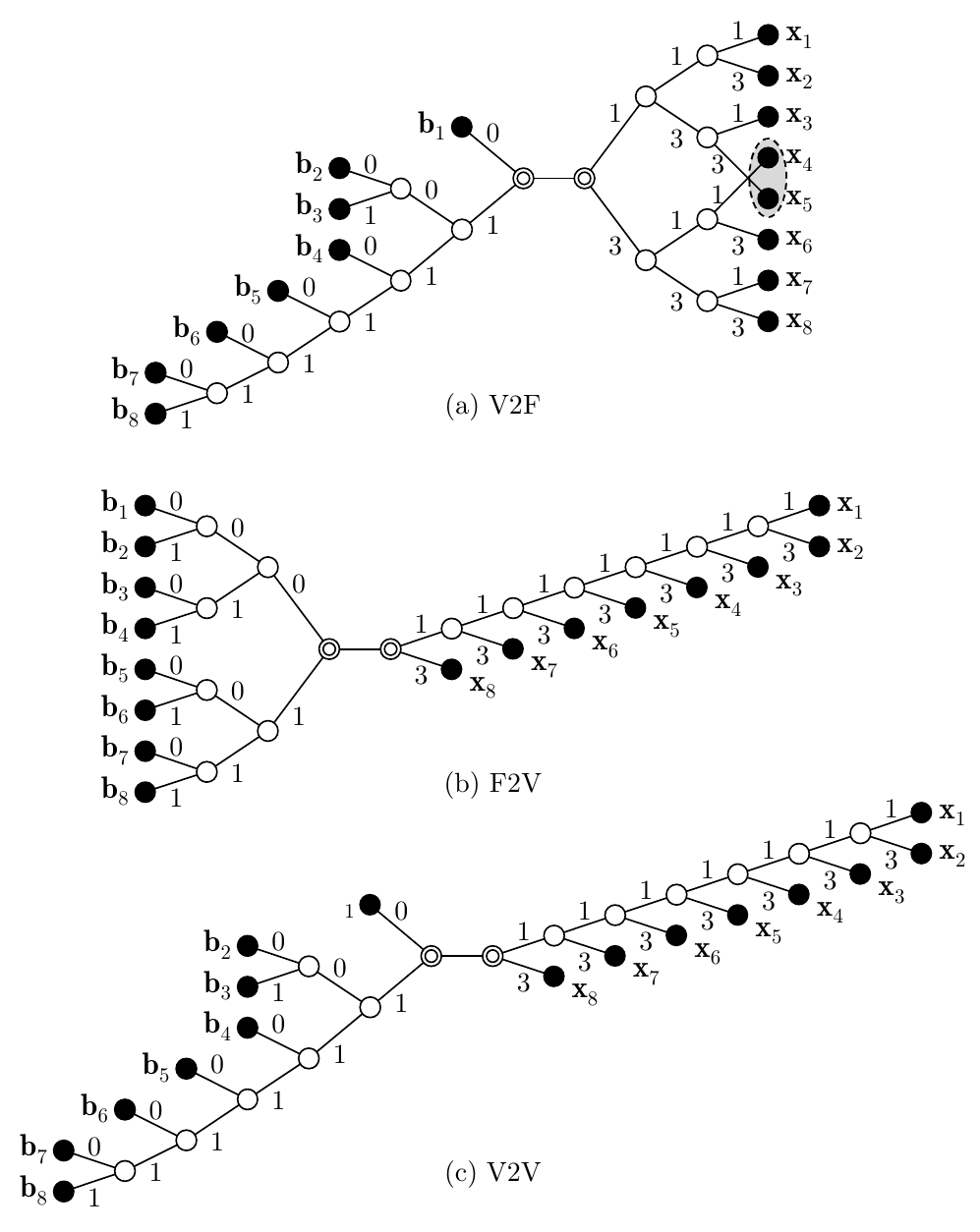}
\caption{Root-concatenated trees representing the codes in Tab.~\ref{tab:examples}.}
\label{fig:trees}
\end{figure}
%%%
Let $\bm{b}_n$ and $\bm{x}_n$ denote the $n$-th information word in $\bmcal{B}$ and the $n$-th codeword in $\bmcal{X}$, respectively, and let $l(\bm{a})$ denote the length of string $\bm{a}$.
Since mapping~$\mathcal{C}$ is bijective, we abuse notation to write $|\mathcal{C}|=|\bmcal{B}|=|\bmcal{X}|$.
Table~\ref{tab:examples} shows examples of V2F, F2V, and V2V codes with $|\mathcal{C}|=8$, for the 2-ASK code alphabet $\mathcal{X}_{2\textrm{-ASK}} = \{1,3\}$.
Let $\bm{l}(\bmcal{B}) := [l(\bm{b}_1), \ldots, l(\bm{b}_{N})]$ with $N = |\mathcal{C}|$ denote an information word length vector, $\bm{l}(\bmcal{X}) := [l(\bm{x}_1), \ldots, l(\bm{x}_{N})]$ a codeword length vector, and $\bm{e}(\bmcal{X}) := [||\bm{x}_1||^2, \ldots, ||\bm{x}_{N}||^2]$ a  codeword energy vector.
For example, the V2V code in Table~\ref{tab:examples}(c) has $\bm{l}(\bmcal{B}) = [1, 3, 3, 3, 4, 5, 6, 6]$, $\bm{l}(\bmcal{X}) = [7, 7, 6, 5, 4, 3, 2, 1]$, and $\bm{e}(\bmcal{X}) = [7, 15, 14, 13, 12, 11, 10, 9]$.
The tree diagrams for the codes in Table~\ref{tab:examples} are depicted in Fig.~\ref{fig:trees}, where double circles, open circles, and closed circles represent the root nodes, branch nodes, and leaf nodes, respectively, which will be defined below.

A code is formed by concatenating the roots of two \emph{ordered} trees, which we call \emph{left} and \emph{right trees} depending on their relative position.
We use the following terminologies throughout the paper to describe the structure of a right tree (defined in a similar fashion for a left tree):
\begin{itemize}
  \item \emph{Root}: The left-most node of a tree.
  \item \emph{Child}: A node directly connected to the right of a node.
  \item \emph{Parent}: The converse of a child.
  \item \emph{Siblings}: A group of nodes with the same parent.
  \item \emph{Branch}: A node with at least one child.
  \item \emph{Leaf}: A node with no children.
  \item \emph{Degree}: The number of children of a node.
  \item \emph{Path}: A sequence of nodes and edges connecting a node with another node.
  \item \emph{Depth}: The depth of a node is the number of edges from the root node to the node, i.e., the path length connecting the root node and the node.
  \item \emph{Height}: The height of a tree is the longest path length between the root and leaves.  
  \item \emph{Size}: The size of a tree is the number of all leaf nodes of the tree.
\end{itemize}
The $n$-th leaf of a left tree represents the information word $\bm{b}_n$ and that of a right tree represents the codeword $\bm{x}_n$.
The depth of the $n$-th leaf equals $l(\bm{b}_n)$ (in the left tree) or $l(\bm{x}_n)$ (in the right tree), and the probability mass vector of the leaves is given by $\bm{p} = [p_1,\ldots,p_N]^T$, with $p_n = \mathbb{P}_{\bm{B}}(\bm{b}_n) = \mathbb{P}_{\bm{X}}(\bm{x}_n)$.

Prefix-free code encoding is a random process that parses a random variable $\bm{B}$ which takes values $\bm{b}\in{\bmcal{B}}$ to produce a codeword $\bm{X} = \mathcal{C}(\bm{B}) \in \bmcal{X}$.
Since our information bits are assumed to be independent and equiprobable, the probability of information words is \emph{dyadic}, i.e., $\mathbb{P}_{\bm{B}}(\bm{b}) = 2^{-l(\bm{b})}$ with $\sum_{\bm{b} \in \bmcal{B}} \mathbb{P}_{\bm{B}}(\bm{b}) = 1$.
Therefore, the expected symbol energy can be calculated as
%%%
\begin{align*} 
\mathsf{E}(\bmcal{B,X}) &:= \frac{\mathbb{E}(||\bm{X}||^2)}{\mathbb{E}(l(\bm{X}))} %\\
		    = \frac{\sum_{\bm{x}\in\bmcal{X}} 2^{-l(\mathcal{C}^{-1}(\bm{x}))} ||\bm{x}||^2}{\sum_{\bm{x}\in\bmcal{X}} 2^{-l(\mathcal{C}^{-1}(\bm{x}))} l(\bm{x})},  			\eqnum \label{eqn:E}
%		    = \frac{\sum_{n=1}^{N} 2^{-u_n} w_n}{\sum_{n=1}^{N} 2^{-u_n} v_n},  			\eqnum \label{eqn:E}
\end{align*}
%%%
where $\mathbb{E}(\cdot)$ denotes expectation.
The resolution rate of this code can be calculated as
%%%
\begin{align*} 
\mathsf{R}(\bmcal{B,X}) &:= \frac{\mathbb{E}(l(\bm{B}))}{\mathbb{E}(l(\bm{X}))} %\\
		    = \frac{\sum_{\bm{b}\in\bmcal{B}} 2^{-l(\bm{b})} l(\bm{b})}{\sum_{\bm{x}\in\bmcal{X}} 2^{-l(\mathcal{C}^{-1}(\bm{x}))} l(\bm{x})}. \eqnum \label{eqn:R}
\end{align*}
%%%
Since it is apparent that $\mathsf{E}(\bmcal{B,X})$ and $\mathsf{R}(\bmcal{B,X})$ are functions of $\bmcal{B}$ and $\bmcal{X}$, we henceforth omit the arguments and denote them by $\mathsf{E}_\mathcal{C}$ and $\mathsf{R}_\mathcal{C}$ for simplicity.
The bijection $\mathcal{C}$ is immediately defined from the ordered sets $\bmcal{B}$ and $\bmcal{X}$, hence the optimal prefix-free coding problem for a target resolution rate $\mathsf{R}^*$ can be written as
%%%
\begin{align*}
\begin{array}{ll}
\underset{\bmcal{B},\bmcal{X}}{\text{minimize}} & \mathsf{E}_\mathcal{C} 		\eqnum \label{eqn:optimize_E} \\		
\text{subject to} & \mathsf{R}_\mathcal{C} \geq \mathsf{R}^*, \\ [-0.2ex]
			 	& \bmcal{B}\text{ and }\bmcal{X} \text{ are left and right trees, respectively.}
\end{array}
\end{align*}

The resolution rate is upper-bounded as $\mathsf{R}_\mathcal{C} \leq \mathbb{H}(X)$, where $X$ denotes an independent and identically distributed (IID) random variable that takes values in $\mathcal{X}$ according to the same distribution as the prefix-free code symbols, and where $\mathbb{H}(X)$ denotes the entropy of $X$.
Under an average energy constraint, the entropy is maximized by the Maxwell-Boltzmann (MB) distribution\cite{Kschischang93} $\mathbb{P}_{X}(x) := \exp(-\lambda |x|^2) / \sum_{x\in\mathcal{X}} \exp(-\lambda |x|^2), \lambda \geq 0$.
Conversely, the average symbol energy $\mathbb{E}(|X|^2)$ is minimized by the MB distribution to achieve a target entropy $\mathbb{H}(X)$.
Therefore, if we denote by $X_{\text{MB}}$ an IID random variable drawn according to the MB distribution that fulfills $\mathbb{H}(X) = \mathsf{R}_\mathcal{C}$, we obtain a lower bound of the average symbol energy to achieve the resolution rate $\mathsf{R}_\mathcal{C}$ as $\mathsf{E}_\mathcal{C} \geq \mathbb{E}(|X_{\text{MB}}|^2)$.
The energy efficiency of a prefix-free code that achieves $\mathsf{R}_\mathcal{C}$ can therefore be evaluated by the \emph{energy gap} defined as
\begin{align*}
\mathsf{E}_{\text{Gap}} := \mathsf{E}_\mathcal{C} / \mathbb{E}(|X_{\text{MB}}|^2). \label{eqn:E_gap}
\end{align*}

%%%%%%%%%%%%%%%%%%%%%%%%%%%%%%%%%%%%%%%%%%%%%%%%%%%%%%%%%%%%%%%%%%%%%%%%%
\subsection{V2F Codes}

To solve problem \eqref{eqn:optimize_E}, we begin with a \emph{balanced} $M$-ary right tree, i.e., a right tree in which every branch has $M$ children and every leaf is at the same depth (see the right tree of Fig.~\ref{fig:trees}~(a)), which represents a codebook $\bmcal{X}$ with a fixed codeword length $l(\bm{x}) = v$ for all $\bm{x}\in\bmcal{X}$.
Then, the codebook can immediately be obtained from the fixed codeword length $v$ by lexicographical ordering as $\bmcal{X}_v = \mathcal{X}_{M\textrm{-ASK}}^{v}$, and $|\mathcal{C}| = M^v$.
In this case, \eqref{eqn:E} and \eqref{eqn:R} degenerate, respectively, to 
\begin{align}
\mathsf{E}_\mathcal{C} = \mathbb{E}(||\bm{X}||^2)/v
\end{align}
and
\begin{align}
\mathsf{R}_\mathcal{C} = \mathbb{E}(l(\bm{B})) / v = \mathbb{H}(\bm{X})/v,
\end{align}
where the last equation holds since $\mathbb{P}_{\bm{X}}$ is dyadic.
Therefore, for V2F codes, minimizing the average code symbol energy is equivalent to minimizing the average codeword energy.
Also, maximizing $\mathsf{R}_\mathcal{C}$ subject to $\mathsf{E}_\mathcal{C} \leq \mathsf{E}^*$ is equivalent to maximizing $\mathbb{H}(\bm{X})$ subject to $\mathbb{E}(||\bm{X}||^2) \leq v\mathsf{E}^*$.

%%%%%%
\begin{figure}[!t]
\centering
\subfloat[][]{\!\includegraphics[width=0.47\linewidth]{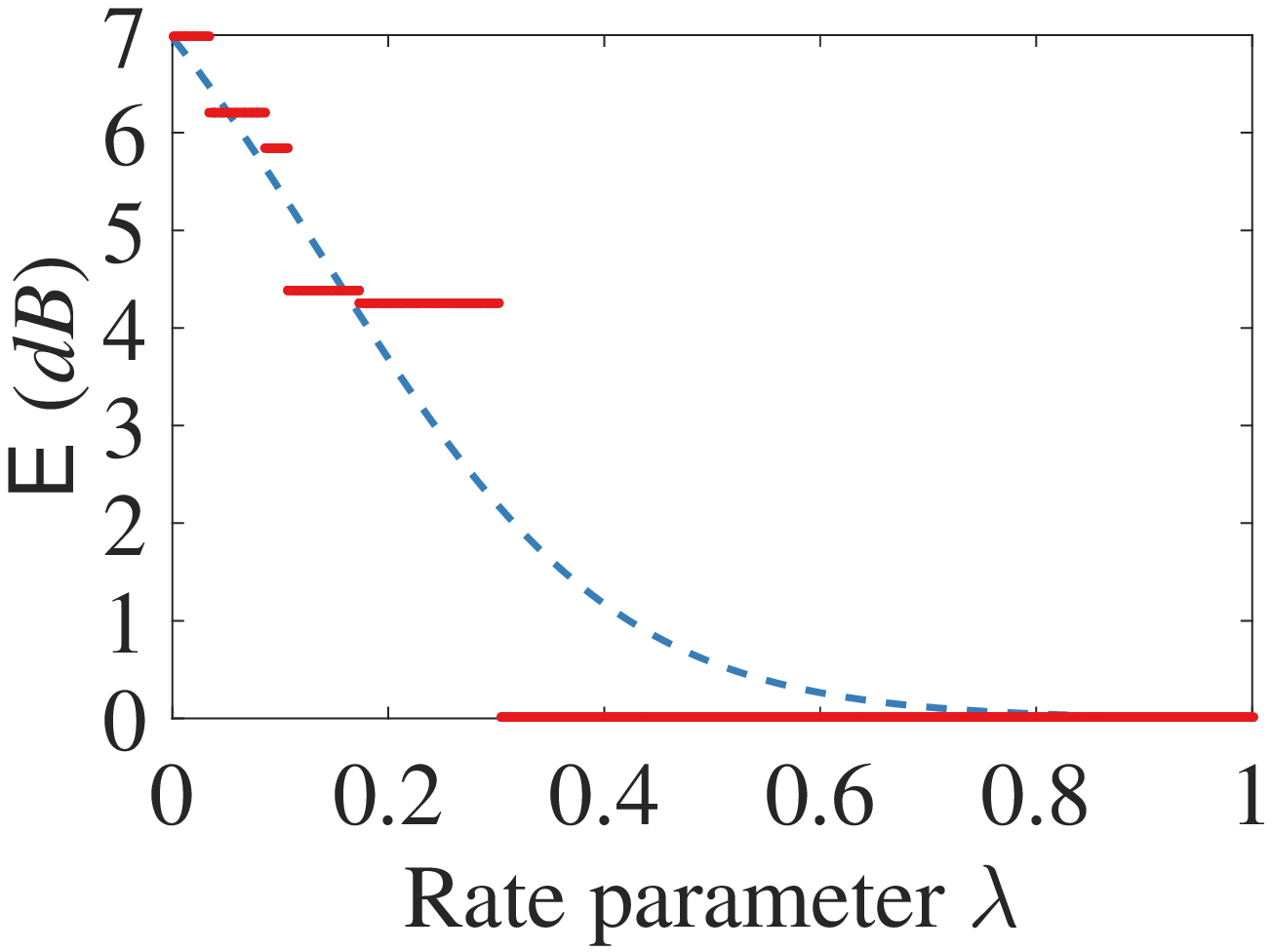}\label{fig:v2f_ghc_a}}%
%\subfloat[][]{\!\includegraphics[height=8.5em]{fig_lambda_E_cr.pdf}\label{fig:v2f_ghc_a}}%
\quad
\subfloat[][]{\!\includegraphics[width=0.47\linewidth]{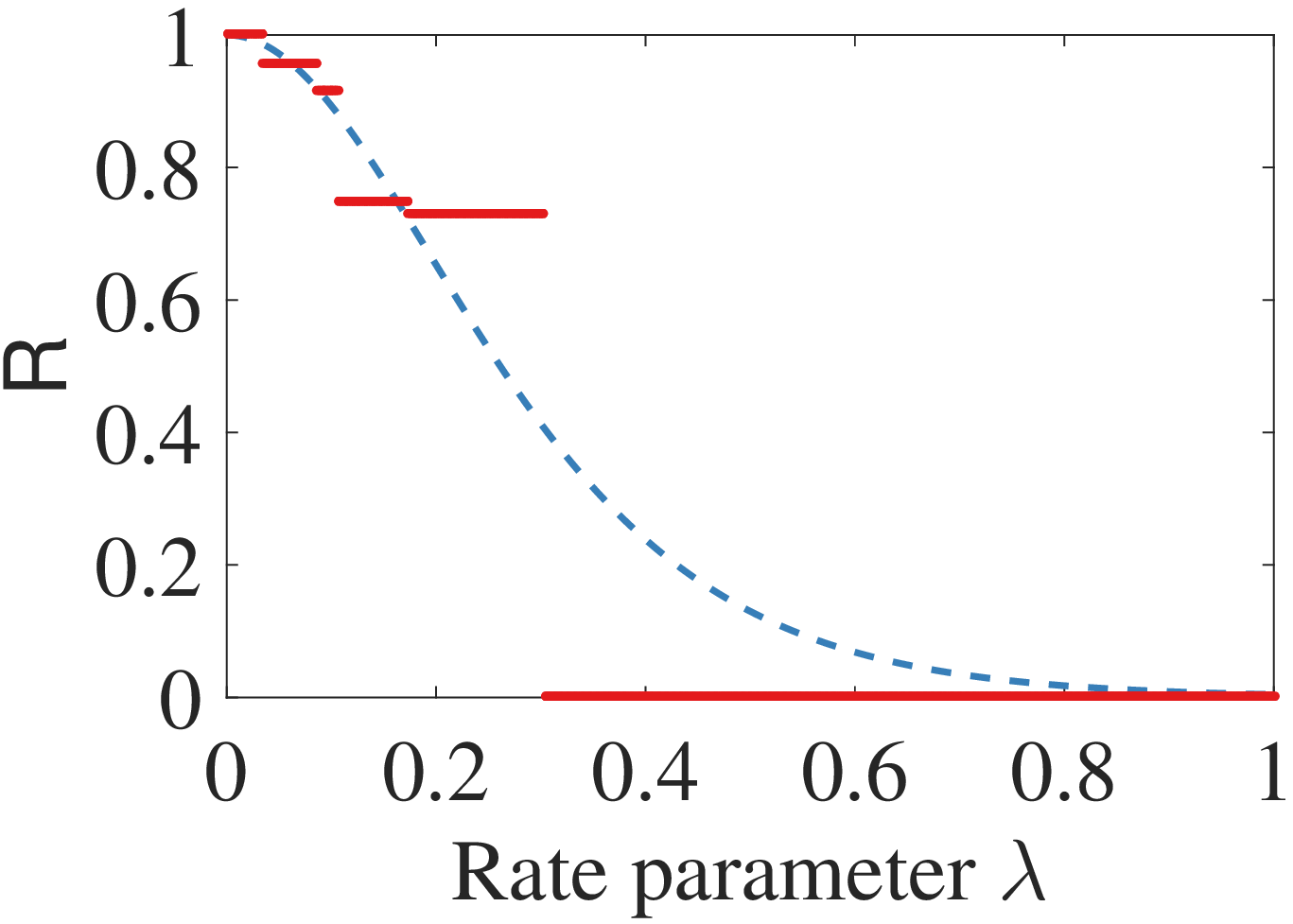}\label{fig:v2f_ghc_b}}
%\subfloat[][]{\!\includegraphics[height=8.5em]{fig_lambda_R_cr.pdf}\label{fig:v2f_ghc_b}}\\[-0.1em]
\caption{(a) Average symbol energy of a continuous MB PMF (dashed lines) and its dyadic approximation (solid lines) divided by that of a uniform PMF $\mathsf{E}_{\text{U}}$, and (b) the corresponding entropy rate. The codewords are taken from Tab.~\ref{tab:examples}(a).}% 
\label{fig:v2f_ghc}
\end{figure}
%%%%%%

If we waive the dyadic constraint on $\mathbb{P}_{\bm{X}}$, the entropy $\mathbb{H}(\bm{X})$ is maximized by $\bm{X}_{\text{MB}}$ that follows an MB distribution
$\mathbb{P}_{\bm{X}_{\text{MB}}}(\bm{x}) := \exp(-\lambda ||\bm{x}||^2) / \sum_{\bm{x}\in\bmcal{\bm{X}}} \exp(-\lambda ||\bm{x}||^2)$.
Note that $\bm{X}_{\text{MB}}$ is a random vector, whereas $X_{\text{MB}}$ in \eqref{eqn:E_gap} is a random scalar.
Since $||\bm{x}||^2 = \sum_{x\in\bm{x}}|x|^2$, we have that $\mathbb{P}_{\bm{X}_{\text{MB}}}(\bm{x}) = \prod_{x\in{\bm{x}}}\mathbb{P}_{X_{\text{MB}}}(x)$.
We also have that $\mathbb{H}(\bm{X}_{\text{MB}}) = v \mathbb{H}(X_{\text{MB}})$.
Therefore, if a dictionary $\bmcal{B}$ can make the codewords $\bm{X}$ follow the distribution $\mathbb{P}_{\bm{X}_{\text{MB}}}$, the code achieves both $\mathsf{E}_\mathcal{C} = \mathbb{E}(|X_{\text{MB}}|^2)$ and $\mathsf{R}_\mathcal{C} = \mathbb{H}(X_{\text{MB}})$.
Indeed, LEC coding \cite{Lempel73} is optimal in the sense that it yields a special instance of the MB distribution, for which the rate parameter is given by $\lambda = \log(\beta)$ with $\beta$ being a root of the characteristic condition $\sum_{\bm{x}\in\bmcal{X}} \beta^{-||\bm{x}||^2} = 1$.
The dashed curves in Fig.~\ref{fig:v2f_ghc}~(a) and (b) show the average symbol energy and the entropy rate of $X_{\text{MB}}$ as a strictly monotonically decreasing function of $\lambda$ (this is true in general, see \cite[Section~5.C]{Bocherer15}), obtained from the codewords of Tab.~\ref{tab:examples}~(a).
Due to the monotonicity, the rate parameter $\lambda$ of the MB PMF $\mathbb{P}_{\bm{X}_\text{MB}}$ that leads to the minimum average symbol energy $\mathbb{E}(|X_{\text{MB}}|^2)$ subject to the rate constraint $\mathbb{H}(X_{\text{MB}}) = \mathsf{R}^*$ can simply be obtained by the \emph{bisection} method.
Once we obtain $\mathbb{P}_{\bm{X}_\text{MB}}$, its optimal dyadic approximate $\mathbb{P}_{\widetilde{\bm{X}}_\text{MB}}$ can be obtained by \emph{Geometric Huffman coding (GHC)}~\cite{Bocherer11}, i.e., $\mathbb{P}_{\widetilde{\bm{X}}_\text{MB}} = \text{GHC} (\mathbb{P}_{\bm{X}_\text{MB}})$, optimal in the sense that it minimizes the KL divergence $\mathbb{D}( \mathbb{P}_{\widetilde{\bm{X}}_\text{MB}} \| \mathbb{P}_{\bm{X}_\text{MB}}) := \sum_{ \bm{x} \in \bmcal{X} } \mathbb{P}_{\widetilde{\bm{X}}_\text{MB}}(\bm{x}) \log_2 \frac{\mathbb{P}_{\widetilde{\bm{X}}_\text{MB}}(\bm{x}) } {\mathbb{P}_{\bm{X}_\text{MB}}(\bm{x})}$.
Notice that, since $\mathcal{X}$ and $v$ completely define the right tree $\bmcal{X}_v$, the PMF $\mathbb{P}_{\bm{X}_\text{MB}}$ can immediately be obtained for an arbitrary target resolution rate $\mathsf{R}^*$, using a series of operations $(\mathcal{X}, v, \mathsf{R}^*) \mapsto \mathbb{P}_{\bm{X}_\text{MB}} \mapsto \mathbb{P}_{\widetilde{\bm{X}}_\text{MB}} \mapsto (\bmcal{B},\bmcal{X}_v)$.
The solid lines in Fig.~\ref{fig:v2f_ghc} show the average symbol energy and resolution rate that can be obtained using this method for $\mathcal{X}_{2\text{-ASK}}$ with $v=3$.
In this example, there are only five non-zero distinct resolution rates generated by the dyadic PMF $\mathbb{P}_{\widetilde{\bm{X}}_\text{MB}}$.
%%%%%%
\begin{figure}[!t]
\centering
\!\!
\includegraphics[width=1.00\linewidth]{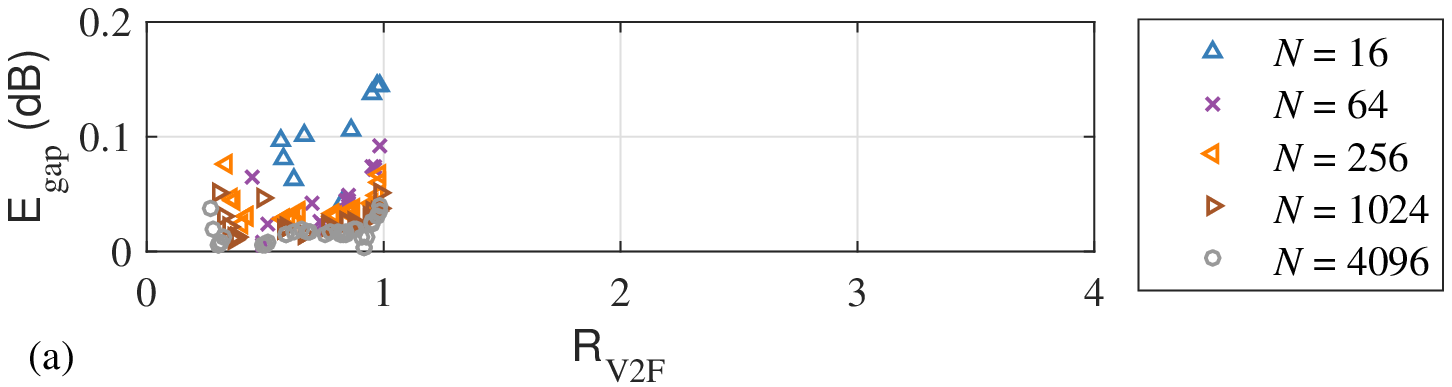}\label{fig:v2f_gap_a}  \\[1.25em]
\includegraphics[width=1.00\linewidth]{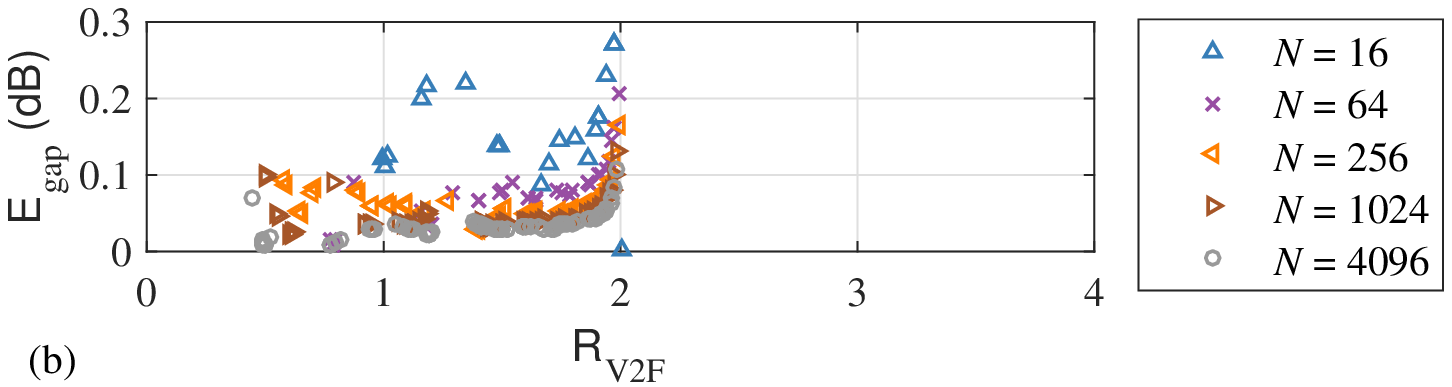}\label{fig:v2f_gap_b}  \\[1.25em]
\includegraphics[width=1.00\linewidth]{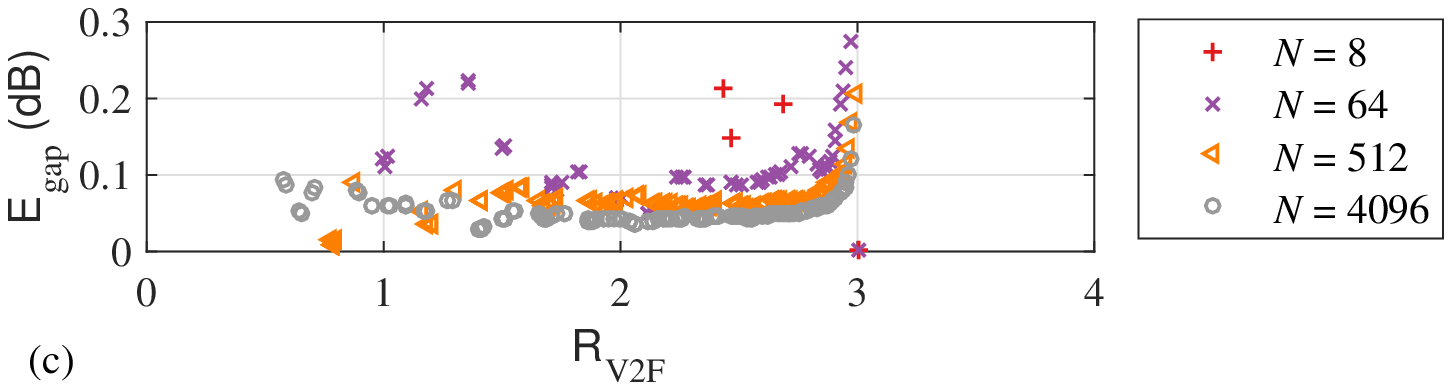}\label{fig:v2f_gap_c}  \\[1.25em]
\includegraphics[width=1.00\linewidth]{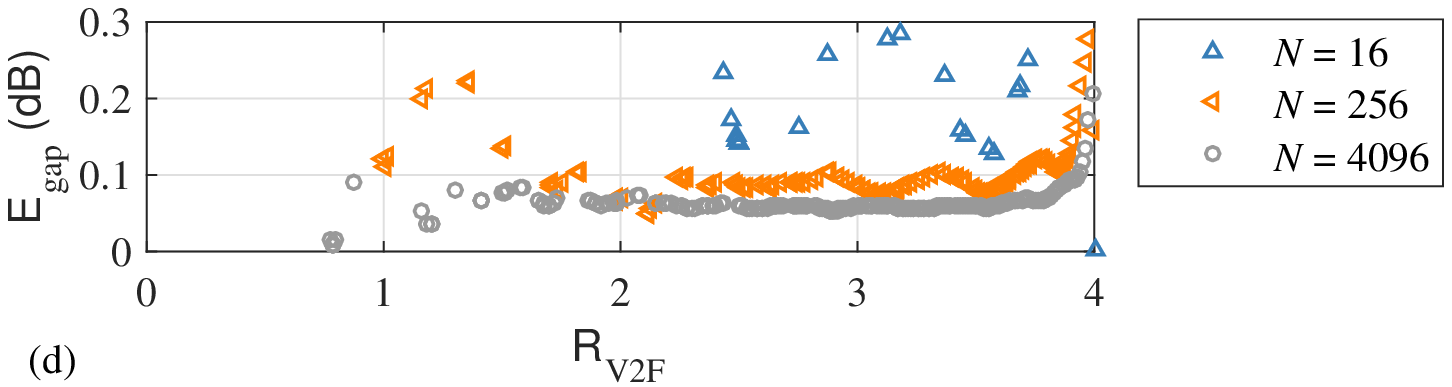}\label{fig:v2f_gap_d}
\caption{Energy gap of V2F codes with $|\mathcal{C}| \leq 4096$ for the (a) 2-ASK, (b) 4-ASK, (c) 8-ASK, and (d) 16-ASK alphabets.}% 
\label{fig:v2f_gap}
\end{figure}
%%%%%%
{However, as $v$ increases, the created codes have much finer granularity of the resolution rate}, as shown in Fig.~\ref{fig:v2f_gap} for up to the 16-ASK alphabet (producing up to the 1024-QAM in the PAS architecture).
Here, we find all V2F codes of the cardinality $|\mathcal{C}| \leq 4096$ for the target resolution rates $\mathsf{R}^* \in \{\Delta,2\Delta,\ldots,\log_2 M\}$ with rate granularity $\Delta = 0.001$.
The codes have less finer granularity of the resolution rate in the lower resolution rate regime, which can be supplemented by allowing variable codeword lengths, as will be discussed in Section~\ref{sec:v2v}.
Figure~\ref{fig:v2f_gap} shows that the constructed V2F codes achieve energy gaps smaller than 0.1~dB across a wide range of resolution rates with a cardinality $|\mathcal{C}| \leq 4096$, or even $|\mathcal{C}| \leq 64$ for $M \leq 4$.

%%%%%%%
%\begin{figure}[!t]
%\centering
%%\includegraphics[width=0.98\linewidth]{fig_v2f_N_R_2ask}\label{fig:v2f_NR_a}  \\[1.5em]
%%\includegraphics[width=0.98\linewidth]{fig_v2f_N_R_4ask}\label{fig:v2f_NR_b}  \\[1.5em]
%%\includegraphics[width=0.98\linewidth]{fig_v2f_N_R_8ask}\label{fig:v2f_NR_c}  \\[1.5em]
%%\includegraphics[width=0.98\linewidth]{fig_v2f_N_R_16ask}\label{fig:v2f_NR_d}
%\includegraphics[width=0.48\linewidth]{fig_v2f_N_R_2ask}\label{fig:v2f_NR_a}  
%\includegraphics[width=0.48\linewidth]{fig_v2f_N_R_4ask}\label{fig:v2f_NR_b}  \\[1.5em]
%\includegraphics[width=0.48\linewidth]{fig_v2f_N_R_8ask}\label{fig:v2f_NR_c}  
%\includegraphics[width=0.48\linewidth]{fig_v2f_N_R_16ask}\label{fig:v2f_NR_d}
%\caption{Entropy rates of V2F codes for $N \leq 4096$, with the (a) 2-ASK, (b) 4-ASK, (c) 8-ASK, and (d) 16-ASK alphabets.}% 
%\label{fig:v2f_NR}
%\end{figure}
%%%%%%%
%

%%%%%%%%%%%%%%%%%%%%%%%%%%%%%%%%%%%%%%%%%%%%%%%%%%%%%%%%%%%%%%%%%%%%%%%%%
\subsection{F2V Codes}

For F2V codes, we consider a balanced binary left tree representing a dictionary $\bmcal{B}$ with a fixed information word length $l(\bm{b}) = u$ for all $\bm{b}\in \bmcal{B}$ such that $|\mathcal{C}| = 2^u$.
Then, the dictionary $\bmcal{B}_u$ can be immediately obtained by lexicographical ordering of $\mathcal{B}^u$, and all the information words $\bm{b}\in\bmcal{B}_u$ are parsed to the codewords $\bm{x} \in \bmcal{X}$ with the equal probability $\mathbb{P}_{\bm{B}}(\bm{b}) = 2^{-u}$.
Let us define the \emph{sum depth} and \emph{sum energy} of a right tree as $l(\bmcal{X}) := \sum_{\bm{x}\in\bmcal{X}} l(\bm{x})$ and $||\bmcal{X}||^2 := \sum_{\bm{x}\in\bmcal{X}} ||\bm{x}||^2$, respectively.
Then, the average symbol energy and the resolution rate of the code are calculated, respectively, as
\begin{align*} 
\mathsf{E}_\mathcal{C}
= \frac{\mathbb{E}(||\bm{X}||^2)}{\mathbb{E}(l(\bm{X}))}
= \frac{\sum_{\bm{x}\in\bmcal{X}}2^{-u}||\bm{x}||^2}{\sum_{\bm{x}\in\bmcal{X}}2^{-u}l(\bm{x})}
= \frac{||\bmcal{X}||^2}{l(\bmcal{X})}, \eqnum \label{eqn:E_f2v}
\end{align*}
and
\begin{align*}
\mathsf{R}_\mathcal{C}
= \frac{\mathbb{E}(l(\bm{B}))}{\mathbb{E}(l(\bm{X}))}
= \frac{u}{\sum_{\bm{x}\in\bmcal{X}}2^{-u}l(\bm{x})}
= \frac{2^{u} u}{l(\bmcal{X})}.	\eqnum \label{eqn:R_f2v}
\end{align*}

%%%%%%%
\begin{figure}[!t]
\centering
\subfloat[][]{\!\includegraphics[width=0.97\linewidth]{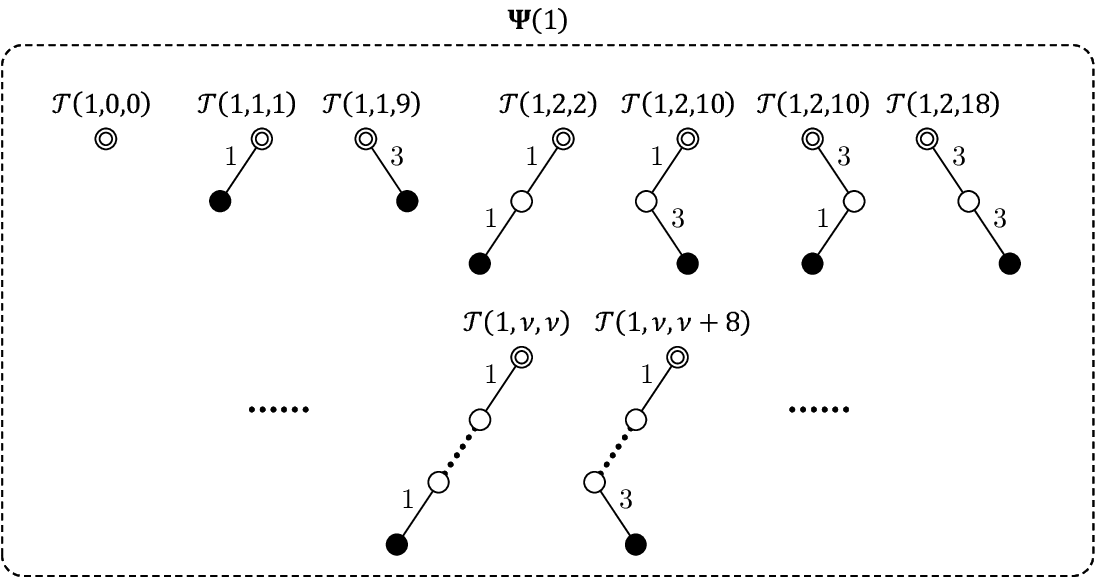}\label{fig:Lplus_trees_a}}\\[0.5em]
\subfloat[][]{\!\includegraphics[width=0.97\linewidth]{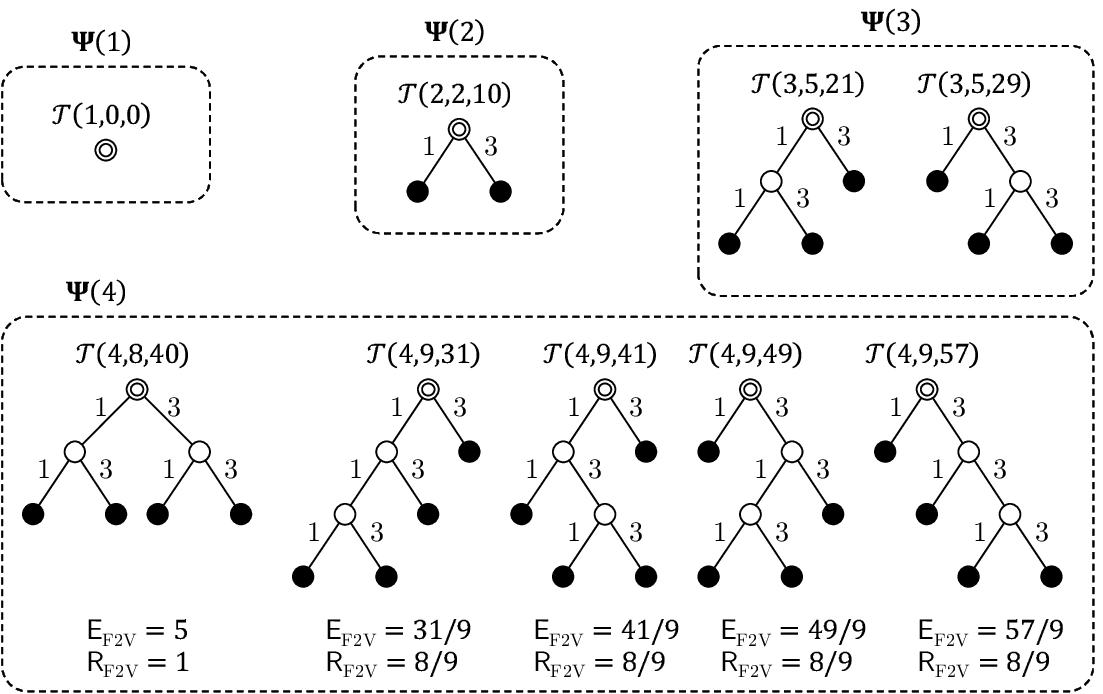}\label{fig:Lplus_trees_b}}
%\subfloat[][]{\!\includegraphics[width=0.46\linewidth]{fig_1plus_trees_cropped.pdf}\label{fig:Lplus_trees_a}} \qquad
%\subfloat[][]{\!\includegraphics[width=0.47\linewidth]{fig_2plus_trees_cropped.pdf}\label{fig:Lplus_trees_b}} \\[-0.5em]
\caption{All $L^+$-trees in $\bm{\Psi}(N)$ with the 2-ASK alphabet for (a) $L = 1, N = 1$, and (b) $L=2, N \leq 4$.}% 
\label{fig:Lplus_trees}
\end{figure}
%%%%%%%

We do not know of any existing method that can solve~\eqref{eqn:optimize_E} with \eqref{eqn:E_f2v} and \eqref{eqn:R_f2v} for an arbitrary desired $\mathsf{R}^*$.
In this paper, therefore, instead of directly solving the problem for a particular $\mathsf{R}^*$, we try to identify a set of trees that produces all possible sum depths $l(\bmcal{X})$ under a tree size constraint $|\bmcal{X}|=u$, hence realizing all possible $\mathsf{R}_\mathcal{C}$ according to~\eqref{eqn:R_f2v}.
The constructed trees are optimal in the sense that they have the minimum sum energy $||\bmcal{X}||^2$ among all $2^+$-trees with the same sum depth~$l(\bmcal{X})$, where $L^+$-tree is a tree in which every node except leaves has a degree not smaller than $L$.
This further implies that the tree achieves the minimum average symbol energy $\mathsf{E}_\mathcal{C}$ for the given $l(\bmcal{X})$ and $\mathsf{R}_\mathcal{C}$ due to \eqref{eqn:E_f2v} and \eqref{eqn:R_f2v}.
Indeed, any $1^+$-tree is allowed for a right tree; however, we restrict the type of trees to $2^+$-trees to avoid infinite tree expansion (see Fig.~\ref{fig:Lplus_trees}~(a) for example).
In Fig.~\ref{fig:Lplus_trees}, and throughout the paper, $\mathcal{T}(N)$ denotes a right tree that has $N$ leaves, i.e., $|\bmcal{X}|=N$, $\mathcal{T}(N,\nu)$ is a tree $\mathcal{T}(N)$ whose sum depth $l(\bmcal{X})=v$, and $\mathcal{T}(N,\nu,\omega)$ is a tree $\mathcal{T}(N,\nu)$ whose sum energy $||\bmcal{X}||^2=\omega$.
Also, let $\bm{\Psi}(N)$ and $\bm{\Psi}(N,\nu)$ be the sets of all trees $\mathcal{T}(N)$ and $\mathcal{T}(N,\nu)$, respectively.
Then, by \eqref{eqn:E_f2v} and \eqref{eqn:R_f2v}, all trees in $\bm{\Psi}(N,\nu)$ lead to the same $\mathsf{R}_\mathcal{C}$ but not necessarily the same $\mathsf{E}_\mathcal{C}$. 
Since a distinct sum depth $\nu$ generates a distinct $\mathsf{R}_\mathcal{C} = 2^u u/\nu$ for the fixed $u$, we have as many optimal trees as the number of distinct sum depths of trees in $\bm{\Psi}(N)$, where an optimal tree with sum depth $\nu$ is defined as $\mathcal{T}^* (N,\nu) = \argmin_{\mathcal{T} (N,\nu) \in \bm{\Psi}(N,\nu)} ||\bmcal{X}||^2$.
A brute-force search of $\mathcal{T}^* (N,\nu)$ in $\bm{\Psi}(N)$ requires exponential time in $N$.
There are known problems that are isomorphic to the problem of counting the number of all trees in $\bm{\Psi}(N)$, including the parenthesizations counting problem \cite[Ch.~15.2]{Cormen01}.
For example, the trees in $\bm{\Psi}(4)$ of Fig.~\ref{fig:Lplus_trees}~(b) have isomorphic representations of $((13)(13))$, $(((13)3)3)$, $((1(13))3)$, $(1((13)3))$, and $(1(1(13)))$ in order, in which the edges connecting two siblings are parenthesized together in a recursive manner from the largest depth of the tree.
The number of parenthesizations for $\bm{\Psi}(N)$ is $C_{N-1}$, with $C_{N}$ being the \emph{Catalan number} defined as $C_{N} := \frac{(2N)!}{N!(N+1)!}$ \cite{Stanley15,Dershowitz80}, which grows as $\Omega(\frac{4^N}{N^{3/2}})$ \cite[p.~333]{Cormen01}.

In order to reduce the search space, we take a \emph{dynamic programming} approach, as in \cite{Golin98,Golin08}, to find a prefix-free code for a known codeword PMF.
%%%%%%
\begin{figure}[!t]
\centering
\includegraphics[width=0.95\linewidth]{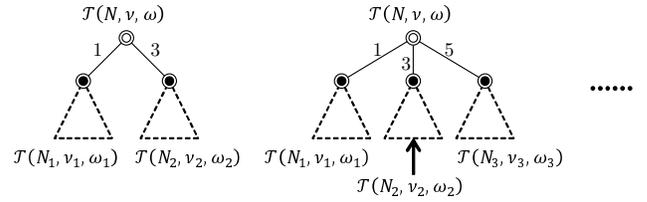}  \label{fig:v2f_NR_a}
\label{fig:v2f_NR_a}
\caption{A large tree constructed from smaller sub-trees.}% 
\label{fig:tree_construction}
\end{figure}
%%%%%%
We begin from a trivial size-1 tree that has only a root, then construct larger trees by appending smaller trees to the root, as shown in Fig.~\ref{fig:tree_construction}.
A tree $\mathcal{T}(N,\nu,\omega)$ formed by appending sub-trees $\mathcal{T}(N_j,\nu_j,\omega_j)$, $j=1,\ldots,J$ satisfies the following relations:
\begin{align*}
N &= \sum_{j=1}^{J} N_j, 												\eqnum \label{eqn:update_N} \\
\nu &= \sum_{j=1}^{J} N_j + \sum_{j=1}^{J} \nu_j = N + \sum_{j=1}^{J} \nu_j, 	\eqnum \label{eqn:update_nu} \\
\omega &= \sum_{j=1}^{J} (2j-1)^2 N_j + \sum_{j=1}^{J} \omega_j, 			\eqnum \label{eqn:update_omega}
\end{align*}
where the last two equations hold since the $j$-th edge from the root should be taken into account $N_j$ times to calculate sum depth and sum energy of the tree.
For example, due to~\eqref{eqn:update_N}, a tree of $N=4$ with the $4$-ASK alphabet (hence $2 \leq J \leq 4$) can be constructed from $J$-tuples: $(N_1, N_2) = (1,3), (2,2), (3,1)$ for $J=2$, $(N_1, N_2, N_3) = (1,1,2), (1,2,1), (2,1,1)$ for $J=3$, and $(N_1, N_2, N_3, N_4) = (1,1,1,1)$ for $J=4$. 
Here, we do not need to use all sub-trees of size $<N$ to construct a set of size-$N$ trees, since we have an \emph{optimal sub-substructure} property:
\begin{thm}\label{thm_opt_subtree}
An optimal tree $\mathcal{T}^* (N)$ contains only optimal sub-trees $\mathcal{T}^* (n < N)$ in it.% with $n\in\{1,\ldots,N-1\}$.
%Given the sets $\bm{\Psi}^* (n)$ of optimal trees $\mathcal{T}^* (n)$ for all $n=1,\ldots,N-1$, We can efficiently build a set $\bm{\Psi}^* (N)$ of optimal trees $\mathcal{T}^* (N)$ by  i.e., tree with the smallest energies in $\bm{\Psi}^* (N)$ can be constructed from 
\end{thm}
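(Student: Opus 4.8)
The plan is to prove the statement by a standard cut-and-paste (exchange) argument, exactly as in classical dynamic-programming optimality-of-substructure proofs. Fix an optimal tree $\mathcal{T}^*(N)$ with sum depth $\nu$ and sum energy $\omega$, and decompose it at the root: by construction it is obtained by appending sub-trees $\mathcal{T}(N_j,\nu_j,\omega_j)$, $j=1,\ldots,J$, hanging off the $j$-th edge from the root, so that \eqref{eqn:update_N}, \eqref{eqn:update_nu}, \eqref{eqn:update_omega} hold. I want to show each $\mathcal{T}(N_j,\nu_j,\omega_j)$ must itself be an optimal tree $\mathcal{T}^*(N_j,\nu_j)$, i.e.\ it has the minimum sum energy among all $2^+$-trees with $N_j$ leaves and sum depth $\nu_j$.

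The core step is the exchange: suppose, for contradiction, that some appended sub-tree $\mathcal{T}(N_{j_0},\nu_{j_0},\omega_{j_0})$ is not optimal, so there exists $\widehat{\mathcal{T}}(N_{j_0},\nu_{j_0},\widehat{\omega}_{j_0})$ with $\widehat{\omega}_{j_0} < \omega_{j_0}$. Replace $\mathcal{T}(N_{j_0},\nu_{j_0},\omega_{j_0})$ by $\widehat{\mathcal{T}}$ in the same $j_0$-th slot off the root, leaving every other appended sub-tree untouched. The resulting object is still a valid $2^+$-tree with $N$ leaves: the root degree $J$ is unchanged, no internal node's degree was altered, and the depths/energies of the other $j\neq j_0$ sub-trees are unaffected. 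By \eqref{eqn:update_N} the leaf count is still $N$; by \eqref{eqn:update_nu}, since $\widehat{\mathcal{T}}$ has the same $N_{j_0}$ and the same $\nu_{j_0}$, the new sum depth is still $\nu$; but by \eqref{eqn:update_omega} the new sum energy is $\omega - \omega_{j_0} + \widehat{\omega}_{j_0} < \omega$. This contradicts the optimality of $\mathcal{T}^*(N)$ among trees in $\bm{\Psi}(N,\nu)$. Hence every appended sub-tree of $\mathcal{T}^*(N)$ is an optimal sub-tree, and then the claim follows by induction on $N$ (the base case being the trivial size-$1$ tree, which is vacuously optimal).

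I also need the small observation that ``contains as a sub-tree'' is closed under this recursion: since $\mathcal{T}^*(N)$ is built up recursively from $\mathcal{T}^*(N_j)$'s and each of those is in turn built from optimal sub-trees of smaller size, every sub-tree occurring anywhere in $\mathcal{T}^*(N)$ is optimal for its own (size, sum-depth) pair — this is what the statement ``contains only optimal sub-trees $\mathcal{T}^*(n<N)$'' means. The only real subtlety, and the step I would be most careful about, is making sure the substitution preserves membership in $\bm{\Psi}(N,\nu)$ and not merely $\bm{\Psi}(N)$: one must check that holding $\nu_{j_0}$ fixed in the replacement sub-tree is enough to keep the total sum depth $\nu$ fixed, which is immediate from \eqref{eqn:update_nu} but is exactly why sum depth (and not just leaf count) has to be tracked as a parameter of the sub-tree. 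A minor point worth stating explicitly is that $\widehat{\mathcal{T}}$ being a $2^+$-tree is needed so that the spliced tree is again a legal right tree in the class under consideration; this is guaranteed because the definition of $\mathcal{T}^*(N_{j_0},\nu_{j_0})$ minimizes only over $2^+$-trees.
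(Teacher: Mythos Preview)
Your proof is correct and follows essentially the same cut-and-paste (exchange) argument as the paper: assume a sub-tree is non-optimal, replace it by an optimal one with the same $(N_j,\nu_j)$, and use \eqref{eqn:update_N}--\eqref{eqn:update_omega} to derive a contradiction. Your version is in fact more careful than the paper's, since you explicitly verify that the substitution preserves membership in $\bm{\Psi}(N,\nu)$ and the $2^+$-tree property, and you add the induction on $N$ to extend the conclusion from immediate children of the root to all nested sub-trees.
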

%%%
%%%
\begin{proof}%[\unskip\nopunct]
Suppose that an optimal tree $\mathcal{T}^* (N,\nu,\omega^*)$ has a $j$-th sub-tree $\mathcal{T} (N_j,\nu_j,\omega_j)$.
If the sub-tree $\mathcal{T} (N_j,\nu_j,\omega_j)$ is not an optimal tree, then there exists a sub-tree $\mathcal{T}^* (N_j,\nu_j,\omega_j^*)$ with $\omega_j^* < \omega_j$.
By replacing the sub-tree $\mathcal{T} (N_j,\nu_j,\omega_j)$ with $\mathcal{T}^* (N_j,\nu_j,\omega_j^*)$ from the tree $\mathcal{T}^* (N,\nu,\omega^*)$, we can obtain a new tree $\mathcal{T}' (N,\nu,\omega')$ that has a smaller sum energy than the optimal tree, i.e, with $\omega'<\omega^*$.
This is a contradiction, hence an optimal tree consists of only optimal sub-trees.
\end{proof}
%%%

It can be easily seen that the height of $\mathcal{T}(N)$ is minimized when its leaves have maximally uniform depths.
The depths can differ from each other at most by one, and the minimum height is given by $\lceil \log_{M}N \rceil$ for the $M$-ASK alphabet.
%It is obvious that the height of a tree $\mathcal{T}(N)$ is lower-bounded by $\lceil \log_{M}N \rceil$, achieved when the leaves have maximally uniform depths.
The sum depth of a size-$N$ minimum-height tree is calculated as
$\nu_{\text{min}}	:= (N - M^{\lfloor \log_M N \rfloor}) (\lfloor \log_M N \rfloor + 1) 
 + M^{\lfloor \log_M N \rfloor} \lfloor \log_M N \rfloor$, which is equal to the smallest sum depth of all trees in $\bm{\Psi} (N)$.
If $N$ is a positive integer power of $M$, the minimum sum depth can be simplified as $\nu_{\text{min}} = N \log_M N$.
On the other hand, if there are only two nodes at every depth of a tree (see, e.g., the last tree of Fig.~\ref{fig:Lplus_trees}~(b)), the tree has height $N-1$ that is the largest of all trees in $\bm{\Psi}(N)$.
The sum depth in this case is calculated as $\nu_{\text{max}} 	= \sum_{v=1}^{N-1} v + (N-1) = \frac{(N+2)(N-1)}{2}$.			After some manipulation, it can be seen that the sum depth of a maximum-height tree is also the maximum sum depth of all trees in $\bm{\Psi} (N)$. %\emph{(((Proof needed?)))}
Therefore, if we store only one optimal tree to $\bm{\Psi}^* (N)$ for each $\nu$, the size of $\bm{\Psi}^* (N)$ is upper-bounded by $\nu_{\text{max}} - \nu_{\text{min}} + 1 = \frac{N^2}{2} + O(N \log_M N)$; i.e., $|\bm{\Psi}^*(N)|$ grows as $O(N^2)$.
Since there are at most $N$ choices of the optimal sub-tree sets (ranging from an empty set to $\bm{\Psi}^*(N-1)$) for each of the $M$ edges of the root for enumerating all trees of size $N$ (using the optimal sub-tree property), the number of all choices to build $\bm{\Psi}(N)$ is of the complexity $\Omega(N^M)$.
As aforementioned, each of the sub-tree sets is of size at most $\Omega(N^2)$, hence we have the search space of size $\Omega(N^{M+2})$ to identify $\bm{\Psi}^*(N)$.
The search time is polynomial in $N$, and exponential in $M$, indicating the intractability of constructing F2V codes for a large alphabet.

To further reduce the search space, we can exploit the fact that a larger sub-tree is appended as far left as possible from the root of an optimal tree; i.e., 
\begin{thm}
\label{thm_opt_subtree2}
Let $\mathcal{T}_j$ for $j=1,\ldots,J$ constitute the $j$-th sub-tree of an optimal tree $\mathcal{T}^*$ such that the root of $\mathcal{T}_j$ is the $j$-th child of the root of $\mathcal{T}^*$, then the sub-trees satisfy $|\mathcal{T}| > |\mathcal{T}_1| \geq \ldots \geq |\mathcal{T}_J|$.
\end{thm}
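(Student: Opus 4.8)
The plan is to argue by contradiction via an exchange argument on the sub-trees hanging off the root, in the same spirit as the proof of Theorem~\ref{thm_opt_subtree}. Fix an optimal tree $\mathcal{T}^* = \mathcal{T}^*(N,\nu,\omega^*)$ and let $\mathcal{T}_j = \mathcal{T}(N_j,\nu_j,\omega_j)$, $j = 1,\ldots,J$, be its sub-trees, so that by \eqref{eqn:update_N}, \eqref{eqn:update_nu} and \eqref{eqn:update_omega} we have $N = \sum_j N_j$, $\nu = N + \sum_j \nu_j$, and $\omega^* = \sum_j (2j-1)^2 N_j + \sum_j \omega_j$. The first thing I would record is that permuting $\mathcal{T}_1,\ldots,\mathcal{T}_J$ among the $J$ edges of the root produces another tree in $\bm{\Psi}(N,\nu)$: the leaf count $N$ and the quantity $\sum_j \nu_j$ (hence $\nu$) are symmetric functions of the $\mathcal{T}_j$, and no node changes its degree, so the $2^+$-tree property and prefix-freeness are retained. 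Consequently every such permutation yields a tree whose sum energy is at least $\omega^*$.

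The crux is a single swap. Suppose toward a contradiction that $N_i < N_j$ for some positions $i < j$. Interchanging $\mathcal{T}_i$ and $\mathcal{T}_j$ affects only two terms of the first sum in the expression for $\omega^*$, and the sum energy changes by
\begin{align*}
\Delta\omega = \big[(2i-1)^2 N_j + (2j-1)^2 N_i\big] - \big[(2i-1)^2 N_i + (2j-1)^2 N_j\big] = \big[(2i-1)^2 - (2j-1)^2\big]\,(N_j - N_i).
\end{align*}
Since $i < j$ forces $(2i-1)^2 < (2j-1)^2$, and the hypothesis forces $N_j - N_i > 0$, we get $\Delta\omega < 0$: a tree of $\bm{\Psi}(N,\nu)$ with strictly smaller sum energy than $\omega^*$, contradicting the optimality of $\mathcal{T}^*$. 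Hence $N_i \ge N_j$ for all $i < j$, i.e., $|\mathcal{T}_1| \ge \ldots \ge |\mathcal{T}_J|$.

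It remains to verify the strict inequality $|\mathcal{T}| > |\mathcal{T}_1|$, which follows because $\mathcal{T}$ is a $2^+$-tree whose root is a branch, so $J \ge 2$, and each sub-tree has at least one leaf: $|\mathcal{T}| = N = \sum_{j=1}^J N_j \ge N_1 + N_2 \ge N_1 + 1 > N_1 = |\mathcal{T}_1|$. I expect the only delicate point to be making the invariance claim precise — that reordering the sub-trees genuinely keeps both $N$ and $\nu$ fixed, so that the comparison takes place inside a single set $\bm{\Psi}(N,\nu)$ — but this is immediate from \eqref{eqn:update_N} and \eqref{eqn:update_nu}, so no real obstacle remains; the argument is essentially the classical exchange argument for prefix codes, adapted here to the unequal symbol energies $(2j-1)^2$.
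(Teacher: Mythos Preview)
Your proof is correct and follows essentially the same exchange argument as the paper: assume $N_i < N_j$ for some $i<j$, swap the two sub-trees, observe via \eqref{eqn:update_N}--\eqref{eqn:update_nu} that $N$ and $\nu$ are unchanged, and use \eqref{eqn:update_omega} to get a strictly smaller sum energy, contradicting optimality. Your version is slightly more explicit (you write out $\Delta\omega$ in closed form) and you also supply the short justification of the strict inequality $|\mathcal{T}|>|\mathcal{T}_1|$, which the paper leaves implicit.
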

%%%
%%%
\begin{proof}%[\unskip\nopunct]
Let $\mathcal{T} (N_i,\nu_i)$ and $\mathcal{T} (N_j,\nu_j)$ denote two of the sub-trees of an optimal tree $\mathcal{T}^* (N,\nu)$ with $i < j$, and assume $N_i < N_j$.
Then, by exchanging the two sub-trees $\mathcal{T} (N_i,\nu_i)$ and $\mathcal{T} (N_j,\nu_j)$, we can construct another tree $\mathcal{T}' (N,\nu)$ of the same size $N$ and the same sum depth $\nu$ due to \eqref{eqn:update_N} and \eqref{eqn:update_nu}, which has a smaller sum energy than $\mathcal{T}^* (N,\nu)$ by~\eqref{eqn:update_omega}.
This is contradiction, hence an optimal tree must have $N_i \geq N_j$ for any $i > j$.
\end{proof}
\noindent For example, some trees in $\bm{\Psi}(4)$ can be generated from three sub-trees that have $(N_1,N_2,N_3)=(2,1,1)$, $(1,2,1)$, or $(1,1,2)$, but we can discard the latter two triplets by Theorem~\ref{thm_opt_subtree2}.

%%%%%%
%\begin{figure}[!t]
%\centering
%%\includegraphics[width=1.00\linewidth]{fig_f2v_N_R_2ask}\label{fig:f2v_NR_a}  \\[1.5em]
%%\includegraphics[width=1.00\linewidth]{fig_f2v_N_R_4ask}\label{fig:f2v_NR_b}
%\includegraphics[width=0.48\linewidth]{fig_f2v_N_R_2ask}\label{fig:f2v_NR_a}
%\includegraphics[width=0.48\linewidth]{fig_f2v_N_R_4ask}\label{fig:f2v_NR_b}
%\caption{Entropy rates of all F2V codes for $N \leq 64$, with the (a) 2-ASK, and (b) 4-ASK alphabets.}%
%\label{fig:f2v_NR}
%\end{figure}
%%%%%%

%%%%%%
\begin{figure}[!t]
\centering
\includegraphics[width=0.97\linewidth]{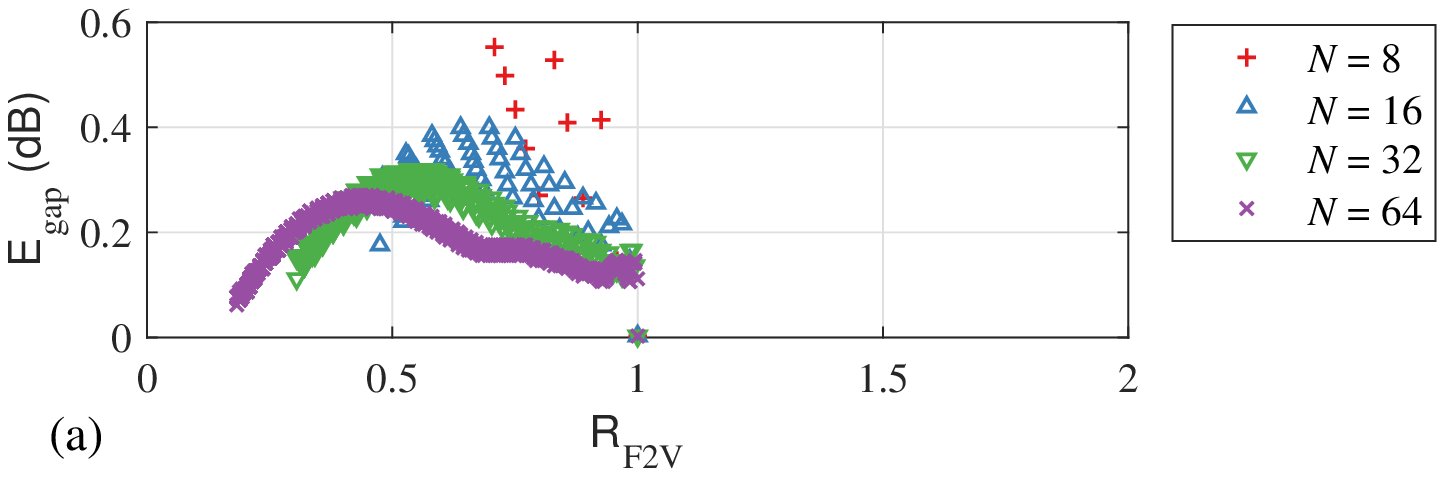}\label{fig:f2v_gap_a} \\[1.2em]
\includegraphics[width=0.97\linewidth]{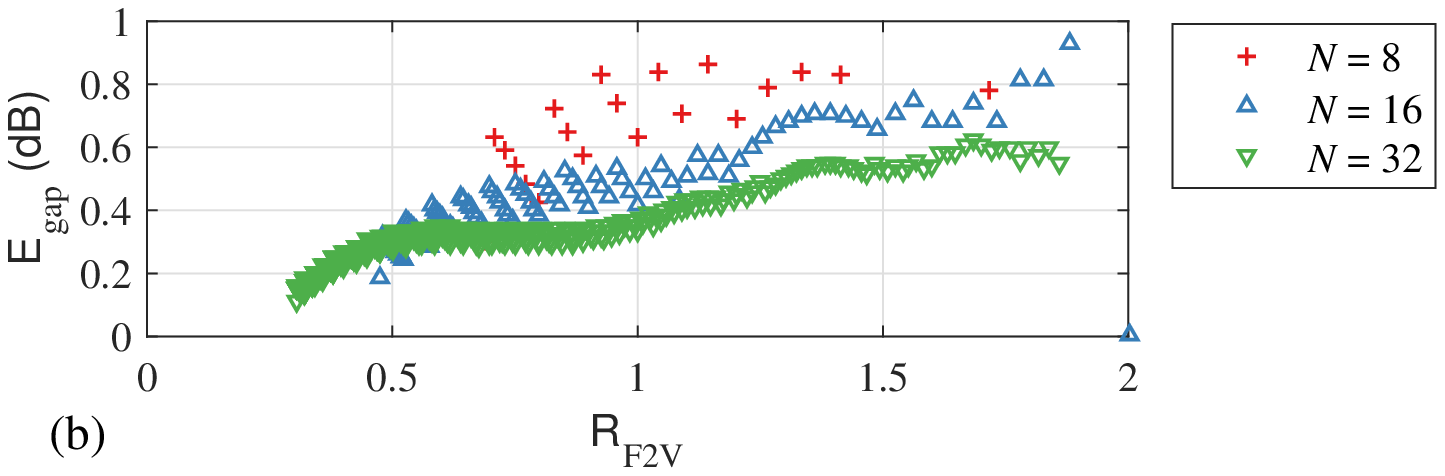}\label{fig:f2v_gap_b} 
\caption{Energy gap of all F2V codes (a) with $|\mathcal{C}| \leq 64$ for the 2-ASK alphabet, and (b) with $|\mathcal{C}| \leq 32$ for the 4-ASK alphabet.}%
\label{fig:f2v_gap}
\end{figure}
%%%%%%

We construct one optimal F2V code for each of the resolution rates that can be created with $N \leq 64$ for $\mathcal{X}_{\text{2-ASK}}$ and $\mathcal{X}_{\text{4-ASK}}$.
As shown in Fig.~\ref{fig:f2v_gap}, F2V codes offer finer granularity of the resolution rates than V2F codes with the same cardinality, albeit with a slightly larger energy gap.
Therefore, a natural consequence is to construct V2V codes to exploit the complementary merits of V2F and F2V codes, as will be illustrated henceforth.

%%%%%%%%%%%%%%%%%%%%%%%%%%%%%%%%%%%%%%%%%%%%%%%%%%%%%%%%%%%%%%%%%%%%%%%%%
\subsection{V2V Codes}
\label{sec:v2v}

We readily have a set $\bm{\Psi}_{\text{R}}^*(N)$ of optimal right trees, representing optimal F2V codes.
From this, without imposing any length constraints on information words and codewords, we can enumerate near-optimal V2V codes of the cardinality $|\mathcal{C}| = N$ in the following manner:
\begin{enumerate}
  \item For each of the desired resolution rates $\mathsf{R}^* = j \Delta_{\mathsf{R}}$, where $j=1,\ldots,J$ for some integer $J$ and the \emph{rate granularity} $\Delta_{\mathsf{R}} = \frac{\log_2 M}{J}$, and for each of the optimal right trees in $\bm{\Psi}_{\text{R}}^*$, identify the MB PMF $\mathbb{P}_{\bm{X}_\text{MB}}$ that fulfills $\mathbb{H}(X_{\text{MB}}) = \mathsf{R}^*$.
  \item For every $\mathbb{P}_{\bm{X}_\text{MB}}$ obtained in Step~1, construct a set  of optimal left trees $\bm{\Psi}_{\text{L}}^* $ that realizes $\mathbb{P}_{\widetilde{\bm{X}}_\text{MB}} = \text{GHC}(\mathbb{P}_{\bm{X}_\text{MB}})$. The trees in $\bm{\Psi}_{\text{L}}^* $ and $\bm{\Psi}_{\text{R}}^* $ have a one-to-one correspondence.
 \item For each of $\mathsf{R}^* = k \Delta_{\mathsf{R}}$, choose a pair of the left and right trees in $\bm{\Psi}_{\text{L}}^* $ and $\bm{\Psi}_{\text{R}}^*$ that yield the minimum $\mathsf{E}_\mathcal{C}$ with a rate discrepancy $|\mathsf{R}_\mathcal{C}-\mathsf{R}^*| < \delta$ for a small $\delta$.
\end{enumerate}

\noindent Indeed, an optimal V2V code does not necessarily consist of an optimal right tree, hence an exhaustive search of the left tree should be performed over \emph{all} right trees in $\bm{\Psi}_{\text{R}}$.
However, due to the exponential growth of $|\bm{\Psi}_{\text{R}}|$ in $N$, we restrict the search space to optimal right trees in $\bm{\Psi}_{\text{R}}^*$; under this constraint, the constructed V2V codes show surprisingly good performance with a very small $N \leq 32$.

Assume that a right tree $\mathcal{T}^*(N)$ has been chosen from $\bm{\Psi}_{\text{R}}^*(N)$, hence $l(\bm{x})$ and $||\bm{x}||^2$ are given.
Then, we have from~\eqref{eqn:R} that 
\begin{align}
\mathsf{R}_\mathcal{C} = (-\sum_{n=1}^{N} p_n \log_2 p_n) / (\sum_{n=1}^{N} p_n l(\bm{x}_n) ) \geq \mathsf{R}^* \\\Leftrightarrow \quad {-\sum_{n=1}^{N} p_n \log_2 (p_n / 2^{- \mathsf{R}^* l(\bm{x}_n)}) } \geq 0.
\end{align}
Let $ q_n := 2^{- \mathsf{R}^* l(\bm{x}_n)}$, $\bm{q} = [q_1,\ldots,q_N]^T$.
Then, by definition of $\mathbb{D}(\bm{p} \| \bm{q})$, we have an equivalence relation
\begin{gather*} 
\mathsf{R}_\mathcal{C} \geq \mathsf{R}^*  \quad  \Leftrightarrow  \quad  -\mathbb{D}(\bm{p} \| \bm{q}) \geq 0.	\eqnum  \label{eqn:KL_v2v}
\end{gather*}
If we waive the dyadic constraint on $\bm{p}$, the optimization problem~\eqref{eqn:optimize_E} translates to
%\begin{align*}
%\begin{array}{ll}
%\underset{\bm{p}}{\text{minimize}} & \mathsf{E}(\bmcal{X}) = \frac { \sum_{n=1}^{N} p_n ||\bm{x}_n||^2 } { \sum_{n=1}^{N} p_n l(\bm{x}_n) }  \eqnum \label{eqn:opt_v2v_minE1}		\\
%\text{subject to} &  -\mathbb{D}(\bm{p} \| \bm{q}) \geq 0,\eqnum \label{eqn:opt_v2v_st1} \\
%%			 	& \sum_{n=1}^{N} p_n = 1. \eqnum \label{eqn:opt_v2v_st2}
%\end{array}
%\end{align*}
\begin{align*}
& \underset{\bm{p}}{\text{minimize}} & &\!\!\!\!\!\!\!\!\!\!\!\! \mathsf{E}_\mathcal{C} = \frac { \mathbb{E}(||\bm{X}||^2) } { \mathbb{E}(l(\bm{X})) } = \frac { \sum_{n=1}^{N} p_n ||\bm{x}_n||^2 } { \sum_{n=1}^{N} p_n l(\bm{x}_n) }  \eqnum \label{eqn:opt_v2v_minE1}		\\
& \text{subject to} 					& &\!\!\!\!\!\!\!\!\!\! -\mathbb{D}(\bm{p} \| \bm{q}) \geq 0,\eqnum \label{eqn:opt_v2v_st1} \\
&									& &\!\!\!\!\!\!\!\!\!\! \sum_{n=1}^{N} p_n = 1. \eqnum \label{eqn:opt_v2v_st2}
\end{align*}
%\begin{align*}
%& \underset{\bm{p}}{\text{minimize}} & &\!\!\!\!\!\!\!\!\!\!\!\!\!\!\!\!\!\! \mathsf{E} = \frac { \sum_{n=1}^{N} p_n w_n } { \sum_{n=1}^{N} p_n v_n }  \eqnum \label{eqn:opt_v2v_minE1}		\\
%& \text{subject to} 					& &\!\!\!\!\!\!\!\!\!\!\!\!\!\!\!\!\!\! -\mathbb{D}(\bm{p} \| \bm{q}) \geq 0,\eqnum \label{eqn:opt_v2v_st1} \\
%&									& &\!\!\!\!\!\!\!\!\!\!\!\!\!\!\!\!\!\! \sum_{n=1}^{N} p_n = 1. \eqnum \label{eqn:opt_v2v_st2}
%\end{align*}
Let $\mathsf{E}^*$ denote the minimum energy found by solving \eqref{eqn:opt_v2v_minE1}-\eqref{eqn:opt_v2v_st2}.
Since the average symbol energy of any code $\mathcal{C}$ fulfills
\begin{align*} 
\mathsf{E}_\mathcal{C} = \frac { \mathbb{E}(||\bm{X}||^2) } { \mathbb{E}(l(\bm{X})) }  \geq \mathsf{E}^* \quad
\Leftrightarrow	\quad \mathbb{E}(||\bm{X}||^2)  - \mathsf{E}^* \cdot \mathbb{E}(l(\bm{X})) \geq 0, \eqnum \label{eqn:opt_v2v_nec_suf}
\end{align*}
where the equality in the right-hand side holds if and only if $\mathsf{E}_\mathcal{C} = \mathsf{E}^*$, the optimal PMF is a solution to a convex optimization problem
\begin{align}
\bm{p}^* = \underset{\bm{p}}{\text{argmin}} \: \left[\mathbb{E}(||\bm{X}||^2) - \mathsf{E}^* \cdot \mathbb{E}(l(\bm{X})) \right].
\end{align}
%%%%%%%%%%%%%%%
\begin{algorithm}[t!]
\label{alg:v2v}
\caption{Finding an optimal codeword PMF $\bm{p}^*$ for $\mathsf{R}^*$}
\begin{algorithmic}[1]
\renewcommand{\algorithmicrequire}{\textbf{Input:}}
\renewcommand{\algorithmicensure}{\textbf{Output:}}
\REQUIRE $N,\bm{v,w,q}$
\ENSURE  $\bm{p}^*$
\\ \textit{Initialization}: \\
$\ell\leftarrow0$, $\bm{p}^{(0)} \leftarrow \bm{q}/\sum_{n=1}^{N} q_n$, ${\mathsf{E}}_\mathcal{C}^{(0)} \leftarrow \mathbb{E}_{\bm{p}^{(0)}} (||\bm{X}||^2)/ \mathbb{E}_{\bm{p}^{(0)}} (l(\bm{X}))$.
\REPEAT
	\STATE $ \ell \leftarrow \ell + 1$
	\STATE $\bm{p}^{(\ell)} \leftarrow \underset{\bm{p}}{\text{argmin}} \: \left[\mathbb{E}(||\bm{X}||^2) - {\mathsf{E}}_\mathcal{C}^{(\ell-1)} \cdot \mathbb{E}(l(\bm{X})) \right]$\\
	$\qquad\quad\:$subject to $-\mathbb{D}(\bm{p} \| \bm{q}) \geq 0$, $\sum_{n=1}^{N} p_n = 1$.
	\STATE ${\mathsf{E}}_\mathcal{C}^{(\ell)} \leftarrow  \mathbb{E}_{\bm{p}^{(\ell)}} (||\bm{X}||^2)/ \mathbb{E}_{\bm{p}^{(\ell)}} (l(\bm{X}))$.
\UNTIL{$\mathsf{E}_\mathcal{C}^{(\ell-1)} - \mathsf{E}_\mathcal{C}^{(\ell)} < \varepsilon$}
\RETURN $\bm{p}^* \leftarrow \bm{p}^{(\ell)}$
\end{algorithmic}
\end{algorithm}
%%%%%%%%%%%%%%%
%%%%%%%%%%%%%%%%
%\begin{algorithm}[t!]
%\label{alg:v2v}
%\caption{Finding an optimal codeword PMF $\bm{p}^*$ for $\mathsf{R}^*$}
%\begin{algorithmic}[1]
%\renewcommand{\algorithmicrequire}{\textbf{Input:}}
%\renewcommand{\algorithmicensure}{\textbf{Output:}}
%\REQUIRE $N,\bm{v,w,q}$
%\ENSURE  $\bm{p}^*$
%\\ \textit{Initialization}: \\
%$\ell\leftarrow0$, $\bm{p}^{(\ell)} \leftarrow \bm{q}/\sum_{n=1}^{N} q_n$, ${\mathsf{E}}^{(\ell)} \leftarrow \mathbb{E}_{\bm{p}^{(\ell)}} (W)/ \mathbb{E}_{\bm{p}^{(\ell)}} (V)$.
%\REPEAT
%	\STATE $ \ell \leftarrow \ell + 1$
%	\STATE $\bm{p}^{(\ell)} \leftarrow \underset{\bm{p}}{\text{argmin}} \: \left[\mathbb{E}(W) - {\mathsf{E}}^{(\ell-1)} \cdot \mathbb{E}(V) \right]$\\
%	$\qquad\quad\:$subject to $-\mathbb{D}(\bm{p} \| \bm{q}) \geq 0$, $\sum_{n=1}^{N} p_n = 1$.
%	\STATE ${\mathsf{E}}^{(\ell)} \leftarrow  \mathbb{E}_{\bm{p}^{(\ell)}} (W)/ \mathbb{E}_{\bm{p}^{(\ell)}} (V)$.
%\UNTIL{$\mathsf{E}^{(\ell-1)} - \mathsf{E}^{(\ell)}  < \varepsilon$}
%\RETURN $\bm{p}^* \leftarrow \bm{p}^{(\ell)}$
%\end{algorithmic}
%\end{algorithm}
%%%%%%%%%%%%%%%%
Also, the constraints in \eqref{eqn:opt_v2v_st1} and \eqref{eqn:opt_v2v_st2} are concave\cite[Th.~2.7.2]{Cover06} and affine, respectively, hence the problem can efficiently be solved by a convex optimization solver such as the CVX \cite{cvx,Grant08}, on condition that $\mathsf{E}^*$ is known.
Since we do not know $\mathsf{E}^*$, we can make an initial guess $\mathsf{E}_\mathcal{C}^{(0)} = \mathbb{E}(||\bm{X}||^2) / \mathbb{E}(l(\bm{X}))$ using a PMF $\bm{p} = \bm{q} / \sum_{n=1}^{N} q_n$, then attempt to reduce the error between the estimate ${\mathsf{E}}_\mathcal{C}^{(\ell)}$ at iteration $\ell$ and the true minimum energy $\mathsf{E}^*$ in an iterative manner.
The initial PMF maximizes $-\mathbb{D}(\bm{p} \| \bm{q})$ in an attempt to fulfill the rate condition~\eqref{eqn:KL_v2v}, where the maximization of $-\mathbb{D}(\bm{p} \| \bm{q})$ by the initial guess $\bm{p}$ can be proven by using a Lagrangian $\mathbb{L}(\bm{p},\lambda) = -\mathbb{D}(\bm{p} \| \bm{q}) + \lambda(\sum_{n=1}^{N} p_n - 1)$.
Since $\bm{p}$ is strictly feasible, the Slater's condition is satisfied for the problem of maximizing the concave function $-\mathbb{D}(\bm{p} \| \bm{q})$, hence strong duality holds\cite[Ch.~5.2.3]{Boyd04}.
It follows that the \textrm{Karush-Kuhn-Tucker} (KKT) conditions \cite[Ch.~5.5.3]{Boyd04} are necessary and sufficient to maximize  $-\mathbb{D}(\bm{p} \| \bm{q})$.
Namely, a PMF $\bm{p}$ maximizes $-\mathbb{D}(\bm{p} \| \bm{q})$ if and only if the gradient of the Lagrangian vanishes at $\bm{p}$; i.e.,
\begin{align*}
&\frac{\partial \mathbb{L}(\bm{p},\lambda)}{\partial p_n} = -\log_2 \frac{p_n}{q_n} - \log_2 e + \lambda = 0 \\
&\Leftrightarrow 	\quad \log_2 \frac{p_n}{q_n} = \lambda - \log_2 e\\
&\Leftrightarrow 	\quad p_n = 2^{\lambda - \log_2 e} q_n = r q_n \text{ for some constant }r.
\end{align*}
Since $\bm{p}$ is a PMF, we have that $r = 1/\sum_{n=1}^{N} q_n$ in the last equation, hence the initial guess $\bm{p}$ maximizes $-\mathbb{D}(\bm{p} \| \bm{q})$.
If this $\bm{p}$ does not fulfill the rate condition~\eqref{eqn:KL_v2v}, then no other PMF can satisfy it, hence the corresponding right tree should be discarded.
Otherwise, we can solve the convex optimization problem iteratively until the estimation error ${\mathsf{E}}_\mathcal{C}^{(\ell)} - \mathsf{E}^*$ at iteration $\ell$ reaches below a termination threshold $\varepsilon > 0$, as shown in Algorithm~1.%, where $\mathbb{E}_{\bm{p}^{(\ell)}}(\cdot)$ denotes expectation with respect to $\bm{p}^{(\ell)}$.

%%%%%%
\begin{figure}[!t]
\centering
\includegraphics[width=0.98\linewidth]{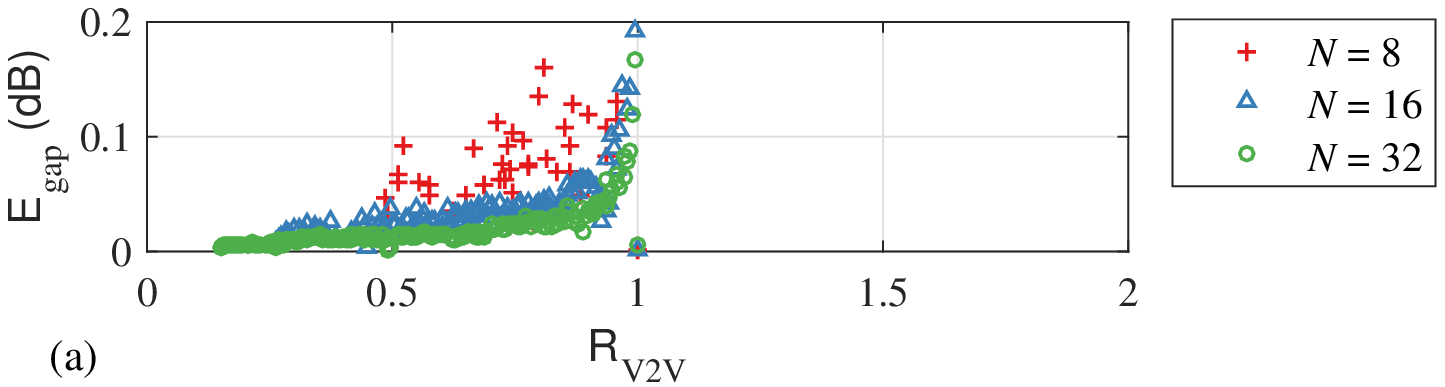}\label{fig:v2v_gap_a} \\[1.2em]
\includegraphics[width=0.98\linewidth]{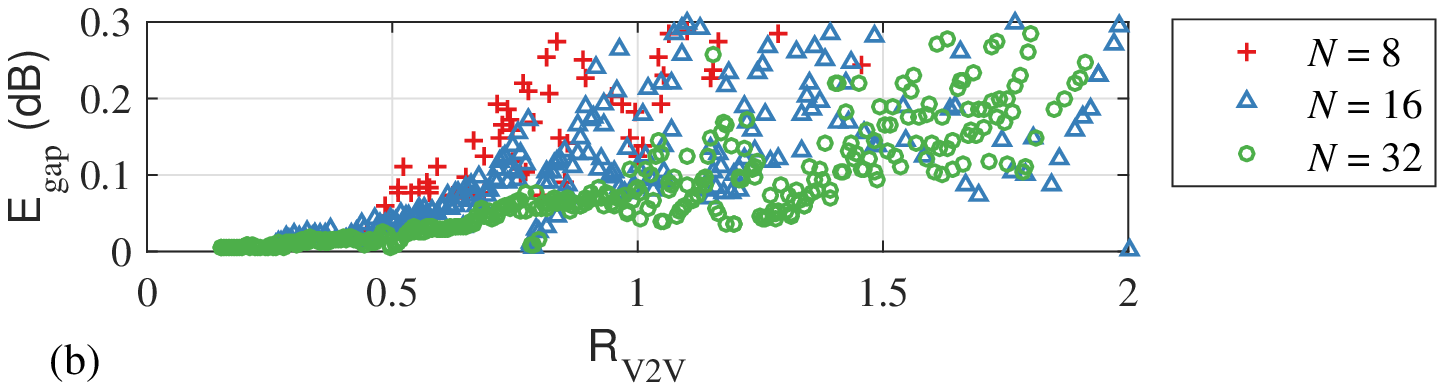}\label{fig:v2v_gap_b} 
\caption{Energy gap of V2V codes with $|\mathcal{C}|\leq 32$ for the (a) 2-ASK, and (b) 4-ASK alphabets.}%
\label{fig:v2v_gap}
\end{figure}
%%%%%%

%%%
\begin{thm}\label{thm_alg_opt}
Given a right tree $\mathcal{T}(N)$, Algorithm~1 produces PMF $\bm{p}^*$ that is asymptotically optimal in iteration $\ell$ for a resolution rate $\mathsf{R}_\mathcal{C} \geq \mathsf{R}^*$, such that $\mathsf{E}_\mathcal{C}$ approaches $\mathsf{E}^*$.
\end{thm}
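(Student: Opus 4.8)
The plan is to recognize Algorithm~1 as an instance of the classical Dinkelbach parametric iteration for the linear-fractional program \eqref{eqn:opt_v2v_minE1}, namely $\min_{\bm p\in S}\mathsf{E}_\mathcal{C}=\min_{\bm p\in S}\mathbb{E}(||\bm X||^2)/\mathbb{E}(l(\bm X))$, and then to run the standard convergence argument for that iteration. Here $f(\bm p):=\mathbb{E}(||\bm X||^2)=\sum_{n}p_n||\bm x_n||^2$, $g(\bm p):=\mathbb{E}(l(\bm X))=\sum_{n}p_n l(\bm x_n)$, and $S$ is the feasible set of \eqref{eqn:opt_v2v_st1}--\eqref{eqn:opt_v2v_st2}. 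First I would record the properties the argument needs: $S$ is nonempty (otherwise the right tree would already have been discarded by the criterion preceding the algorithm), convex (the probability simplex intersected with the sublevel set $\{\bm p:\mathbb{D}(\bm p\|\bm q)\leq 0\}$ of a convex function), and compact; the maps $f$ and $g$ are linear, hence continuous; and $g(\bm p)\geq\min_{n}l(\bm x_n)\geq 1>0$ on $S$. The last fact in particular makes each convex subproblem in line~3 well posed, so each iterate $\bm p^{(\ell)}$ is a well-defined minimizer.

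Next I would analyze the parametric value function $F(\mu):=\min_{\bm p\in S}\big[f(\bm p)-\mu\, g(\bm p)\big]$. As a pointwise minimum of functions that are affine in $\mu$ with strictly negative slopes $-g(\bm p)$, $F$ is finite, concave, continuous, and strictly decreasing. By the equivalence \eqref{eqn:opt_v2v_nec_suf} together with its stated equality case, $F(\mathsf{E}^*)=0$ and $F(\mu)\leq 0\Leftrightarrow\mu\geq\mathsf{E}^*$, so $\mathsf{E}^*$ is the unique zero of $F$. The two workhorse facts about the iterates then follow at once: since $\bm p^{(\ell)}\in S$, we get $\mathsf{E}_\mathcal{C}^{(\ell)}=f(\bm p^{(\ell)})/g(\bm p^{(\ell)})\geq\mathsf{E}^*$ for every $\ell$; and feeding the feasible point $\bm p^{(\ell-1)}$ into the step-$\ell$ subproblem gives $f(\bm p^{(\ell)})-\mathsf{E}_\mathcal{C}^{(\ell-1)}g(\bm p^{(\ell)})\leq f(\bm p^{(\ell-1)})-\mathsf{E}_\mathcal{C}^{(\ell-1)}g(\bm p^{(\ell-1)})=0$, where the last equality uses $\mathsf{E}_\mathcal{C}^{(\ell-1)}=f(\bm p^{(\ell-1)})/g(\bm p^{(\ell-1)})$. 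Dividing the displayed inequality by $g(\bm p^{(\ell)})>0$ yields $\mathsf{E}_\mathcal{C}^{(\ell)}\leq\mathsf{E}_\mathcal{C}^{(\ell-1)}$, and since its left-hand side equals $F(\mathsf{E}_\mathcal{C}^{(\ell-1)})$, which is strictly negative whenever $\mathsf{E}_\mathcal{C}^{(\ell-1)}>\mathsf{E}^*$, the decrease is strict until optimality is reached.

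Finally I would close the loop. The sequence $\{\mathsf{E}_\mathcal{C}^{(\ell)}\}$ is nonincreasing and bounded below by $\mathsf{E}^*$, hence converges to some $\bar{\mathsf{E}}\geq\mathsf{E}^*$; by compactness of $S$, a subsequence $\bm p^{(\ell_k)}$ converges to some $\bar{\bm p}\in S$. Passing to the limit along this subsequence in $F(\mathsf{E}_\mathcal{C}^{(\ell_k-1)})=f(\bm p^{(\ell_k)})-\mathsf{E}_\mathcal{C}^{(\ell_k-1)}g(\bm p^{(\ell_k)})$ and in $\mathsf{E}_\mathcal{C}^{(\ell_k)}=f(\bm p^{(\ell_k)})/g(\bm p^{(\ell_k)})$, using continuity of $F$, $f$, $g$ together with $\mathsf{E}_\mathcal{C}^{(\ell_k-1)},\mathsf{E}_\mathcal{C}^{(\ell_k)}\to\bar{\mathsf{E}}$, yields $F(\bar{\mathsf{E}})=f(\bar{\bm p})-\bar{\mathsf{E}}\,g(\bar{\bm p})$ and $f(\bar{\bm p})=\bar{\mathsf{E}}\,g(\bar{\bm p})$, hence $F(\bar{\mathsf{E}})=0$ and therefore $\bar{\mathsf{E}}=\mathsf{E}^*$ by uniqueness of the zero of $F$. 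Thus $\mathsf{E}_\mathcal{C}^{(\ell)}\downarrow\mathsf{E}^*$, every limit point $\bar{\bm p}$ of the iterates lies in $S$ and attains $f(\bar{\bm p})/g(\bar{\bm p})=\mathsf{E}^*$, so satisfies $\mathsf{R}_\mathcal{C}\geq\mathsf{R}^*$ via \eqref{eqn:KL_v2v}, and, since $\mathsf{E}_\mathcal{C}^{(\ell-1)}-\mathsf{E}_\mathcal{C}^{(\ell)}\to 0$, the stopping test is eventually triggered for any $\varepsilon>0$; this is the asserted asymptotic optimality. I do not expect a genuine obstacle here, as this is the textbook analysis of Dinkelbach iterations; the step that needs the most care is the value-function analysis---establishing that $F$ is strictly decreasing with a unique zero at $\mathsf{E}^*$ and that each per-iteration decrease is strict---both of which rest on the strict positivity of $g=\mathbb{E}(l(\bm X))$ (codeword lengths are at least one) and on nonemptiness of $S$ (the tree survived the discard test).
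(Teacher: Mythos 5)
Your proof is correct and is essentially the same underlying argument as the paper's: both treat line~3 of Algorithm~1 as a Dinkelbach parametric step for the linear-fractional program \eqref{eqn:opt_v2v_minE1}--\eqref{eqn:opt_v2v_st2} and rely on the monotone-decrease-plus-lower-bound structure, together with the equivalence \eqref{eqn:opt_v2v_nec_suf} identifying the optimum as the zero of the parametric value function. The paper (following Prop.~4.5 of B\"ocherer's thesis) is terser: it does not invoke compactness or subsequential limits at all, but simply shows that the stopping rule $\mathsf{E}_\mathcal{C}^{(L-1)}-\mathsf{E}_\mathcal{C}^{(L)}<\varepsilon$ forces the optimal objective value $\Delta^{(L)}:=\min_{\bm p}[\mathbb{E}(\|\bm X\|^2)-\mathsf{E}_\mathcal{C}^{(L-1)}\mathbb{E}(l(\bm X))]$ to lie in an $O(\varepsilon)$ band around $0$, and $\Delta^{(L)}=0$ is equivalent to $\mathsf{E}_\mathcal{C}^{(L-1)}=\mathsf{E}^*$. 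Your version supplies the full textbook convergence argument (concavity and strict monotonicity of $F$, uniqueness of its zero, convergence of the iterate sequence via compactness of $S$), which is more self-contained and also explicitly verifies that the stopping test eventually fires for any $\varepsilon>0$ --- a point the paper leaves implicit. The essential ingredients you isolate (strict positivity of $g=\mathbb{E}(l(\bm X))$ and nonemptiness of $S$ guaranteed by the pre-discard check) are exactly the ones the paper uses, so the two proofs differ only in presentation, not in idea.
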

%%%
%%%
\begin{proof}%[\unskip\nopunct]
The proof follows a similar structure to that of Proposition~4.5 in \cite[Sec.~4.3.1]{Bocherer_thesis}.
Suppose that Algorithm~1 terminates at the iteration $L$.
Then, $\mathsf{{E}}_\mathcal{C}^{(\ell)}$ converges to $\mathsf{E}_\mathcal{C}^{(L)}$ as $\ell$ increases, since the termination condition on Line~5 is not satisfied at every $\ell < L$, hence $\mathsf{{E}}^{(\ell)}_\mathcal{C} \leq \mathsf{{E}}_\mathcal{C}^{(\ell-1)} - \varepsilon$ for $\ell < L$, indicating that $\mathsf{{E}}_\mathcal{C}^{(\ell)}$ is a monotonically decreasing function of $\ell$, while $\mathsf{{E}}_\mathcal{C}^{(\ell)}$ is lower-bounded by $\mathsf{E}^*$.
Furthermore, the converged energy $\mathsf{E}_\mathcal{C}^{(L)}$ is indeed the minimum energy $\mathsf{E}^*$ within a small error.
To see this, let $\Delta^{(\ell)} := \min_{\bm{p}} [\mathbb{E}(||\bm{X}||^2) - \mathsf{E}^{(\ell-1)}_\mathcal{C} \cdot \mathbb{E}(l(\bm{X}))] $ be the optimal value of the objective function at the iteration $\ell$ such that $\mathbb{E}(||\bm{X}||^2) - \mathsf{E}^{(\ell-1)}_\mathcal{C} \cdot \mathbb{E}(l(\bm{X})) \geq \Delta^{(\ell)}$ for any PMF $\bm{p}$, where the equality holds if and only if $\bm{p} = \bm{p}^{(\ell)}$.
Then, $\Delta^{(\ell)} \leq 0$ since $\mathsf{E}_\mathcal{C}^{(\ell-1)} \geq \mathsf{E}^*$ for any PMF $\bm{p}$, where the equality holds if and only if $\mathsf{E}_\mathcal{C}^{(\ell-1)} = \mathsf{E}^*$ by \eqref{eqn:opt_v2v_nec_suf}.
Also, since $\mathbb{E}_{\bm{p}^{(\ell)}}(||\bm{X}||^2) - \mathsf{E}_\mathcal{C}^{(\ell-1)} \cdot \mathbb{E}_{\bm{p}^{(\ell)}}(l(\bm{X})) = \Delta^{(\ell)} \Leftrightarrow \mathsf{E}_\mathcal{C}^{(\ell-1)}  - \mathsf{E}_\mathcal{C}^{(\ell)} = -\Delta^{(\ell)}/\mathbb{E}_{\bm{p}^{(\ell)}}(l(\bm{X}))$, the termination of Algorithm~1 at the iteration $L$ implies $-\Delta^{(L)} / \mathbb{E}_{\bm{p}^{(L)}}(l(\bm{X})) < \varepsilon \Leftrightarrow \Delta^{(L)} > -\varepsilon \cdot \mathbb{E}_{\bm{p}^{(L)}}(l(\bm{X}))  \geq -\varepsilon \cdot \min_{n} l(x_n)$.
Since we have $-\varepsilon \cdot \min_{n} l(x_n) \leq \Delta^{(L)} \leq 0$ while $\Delta^{(L)} = 0$ is a necessary and sufficient condition for $\mathsf{E}_\mathcal{C}^{(L-1)} = \mathsf{E}^*$, Algorithm~1 can closely approach the minimum energy $\mathsf{E}^*$ by choosing a small $\varepsilon$.
\end{proof}
%%%
\noindent In our V2V code construction, Algorithm~1 is terminated mostly in 5 iterations with $\varepsilon = 10^{-10}$.

Once $\bm{p}^*$ is identified for every optimal right tree and for every desired $\mathsf{R}^* = k \Delta_{\mathsf{R}}$, an optimal dyadic estimate of $\bm{p}^*$ can be obtained by $\bm{p} = \text{GHC}(\bm{p}^*)$, which creates a left tree.
Then, among all pairs of such constructed left and right trees, a pair can be chosen for each resolution rate that achieves the minimum $\mathsf{E}_\mathcal{C}$.

Figure~\ref{fig:v2v_gap} shows all V2V codes enumerated with the rate granularity $\Delta_{\text{R}} = 0.005$ and the rate tolerance $\delta = 0.0025$ for the 2-ASK and 4-ASK alphabets with $N = 8,16,32$.
For the 2-ASK alphabet, only $N = 16$ is required to construct V2V codes that have energy gaps smaller than 0.05~dB across a wide range of the resolution rates.
For the 4-ASK alphabet, the energy gap is below 0.2~dB across a wide range of the resolution rates and the realized resolution rates have coarser granularity than those of the 2-ASK alphabet with the same cardinality of the codebook.
%%%%%%%
\begin{figure}[!t]
\centering
\includegraphics[width=0.99\linewidth]{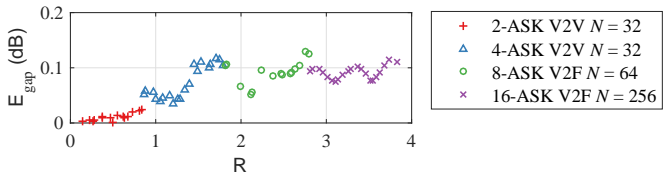}
\caption{Selected prefix-free codes for the 2-, 4-, 8-, and 16-ASK alphabets.} %The entropy rate $\mathsf{R}$ increments in a step size $\leq 0.16$ for rate-adaptable PCS.}
\label{fig:selected_codes}
\end{figure}
%%%%%%%
Some codes selected from Figs.~\ref{fig:v2f_gap} and \ref{fig:v2v_gap} are shown in Fig.~\ref{fig:selected_codes}, whose resolution rates range from 0.15 to 3.83 in a step size $\leq 0.16$.
This shows that we can approach the theoretic minimum energy per code symbol to within 0.13 dB across a wide range of resolution rates for up to 1024-QAM, using codebooks of a cardinality not larger than 256.

 %,in conjunction with a framing method described in the next section.

\section{Framing for Fixed-Rate Transmission}
\label{sec:framing}

In a communication system that needs to carry $k$ information bits using a fixed length-$n$ \emph{frame} of code symbols, a fundamental problem of PCDM is that the codeword length varies depending on the information bits.
Here, and throughout the paper, the frame refers to a container of code symbols, but not of information bits.
Let us define the resolution rate for each of the codewords and for each of the code symbols, as we define the resolution rate for a prefix-free code.
By abuse of terminology, if we refer to them as the \emph{codeword resolution rate} and the \emph{symbol resolution rate}, respectively, these resolution rates are realized at the output of the encoder in a random fashion, and their mean value approaches to $\mathsf{R}_\mathcal{C}$ asymptotically in the number of encoding iterations.
As a consequence, an \emph{overflow} can occur if incoming bits are frequently mapped to codewords whose resolution rate is smaller than $\mathsf{R}_\mathcal{C}$ such that all the code symbols for the $k$ information bits cannot be accommodated in the length-$n$ frame~\cite{Jelinek68buffer}.
On the other hand, if the encoder frequently produces codewords of a resolution rate greater than $\mathsf{R}_\mathcal{C}$, an \emph{underflow} can occur such that part of the frame remains unfilled after completing the encoding.
%The overflow critically hampers application of PCDM to data transmission, and the underflow causes waste of channel use.}
\emph{Framing} in this paper refers to a method that enables a fixed-length frame to always contain a fixed number of information bits, thereby solving the variable length problem of PCDM.
In \cite[Sec.~4.8]{Amjad_thesis}, framing is performed by casting overflow symbols into errors; the probability of error decreases as the frame length grows, but zero-error prefix-free decoding is not possible within a finite-length frame.
%To the best of our knowledge, there is no framing method known to date that allows unique error-free decoding of prefix-free codes.
In~\cite{Yoshida18low}, framing of run-length codes is achieved by making zeros and ones appear exactly the same number of times at the input of the run-length encoder, but the method to construct run-length codes for target resolution rates is not shown.

%This section is devoted to framing of prefix-free codes that facilitates application of PCDM to data transmission%, using some selected codes shown in Fig.~\ref{fig:selected_codes} that are constructed as described in Section~\ref{sec:codes}.
%and to analyze the penalty caused by such framing.

\subsection{Algorithm for Framing}

To achieve a fixed resolution rate $R_{\text{Frame}} < \log_2 M$ in each fixed-length frame for the $M$-ASK alphabet, we use two different codes: a prefix-free code $\mathcal{C}_1 :\bmcal{B}_1 \mapsto \bmcal{X}_1$ with unequal codeword resolution rates, whose mean value $\mathsf{R}_{\mathcal{C}_1}$ is close to $R_{\text{Frame}}$, and a trivial code $\mathcal{C}_2 : \bmcal{B}_2 \mapsto \bmcal{X}_2$ with an equal codeword resolution rate $\mathsf{R}_{\mathcal{C}_2} = \log_2{M} > \mathsf{R}_{\mathcal{C}_1}$ across all the codewords (i.e., $\mathcal{C}_2$ is a typical F2F mapper for uniform $M$-ASK).
The idea is that we begin encoding with $\mathcal{C}_1$ and then \emph{switch} to $\mathcal{C}_2$ at some point of the successive encoding process if an overflow is predicted.
If we keep counting the numbers of input bits and output symbols encoded by $\mathcal{C}_1$ in the first part of encoding, we can also calculate the number of symbols required to encode all the remaining input bits if we use $\mathcal{C}_2$ instead of $\mathcal{C}_1$ from that point onward, enabling prediction of an overflow.
In what follows, we will show that this prediction can be made in a way that unique decoding is possible, and that the penalty due to this switching is small. 

Let $k$ and $n$ be, respectively, a fixed number of input bits and a fixed number of output symbols of a PCDM encoder in each frame; i.e., the information bits and the code symbols belong to $\mathcal{B}^k$ and $\mathcal{X}^n$, respectively.
Then, framing enables PCDM to achieve a fixed ${R}_{\text{Frame}}$, where
\begin{align}
\mathsf{R}_{\mathcal{C}_1} < {R}_{\text{Frame}} := \frac{k}{n} < \mathsf{R}_{\mathcal{C}_2}.	\label{eqn:Rframe1}
\end{align}
This shows that, for the chosen $\mathsf{R}_{\mathcal{C}_1}$ and $\mathsf{R}_{\mathcal{C}_2}$, an additional rate adaptability is also offered by framing, at the expense a larger energy gap.
Let $\bm{b}^{(\ell)}$ and $\bm{x}^{(\ell)}$ denote the information word and the codeword chosen at the encoding iteration~$\ell$ by $\mathcal{C}_1$, respectively.
And assume that the use of $\mathcal{C}_1$ until iteration~$\ell-1$ was assured not to cause an overflow, as long as we switch to $\mathcal{C}_2$ from the next iteration~$\ell$ onwards.
Hence we have used $\mathcal{C}_1$ until iteration~$\ell-1$.
Then, at the next iteration~$\ell$, in order to foresee if $\mathcal{C}_1$ does not still cause an overflow, we need to ensure that
\begin{align}
\gamma_{\text{Ava}}^{(\ell)} \geq \gamma_{\text{Req.}}^{(\ell)},    \label{eqn:cond_no_overflow}
\end{align}
where $\gamma_{\text{Ava}}^{(\ell)} := n - \sum_{i=1}^{\ell} l(\bm{x}^{(i)})$ is the number of available slots in the frame {at the beginning of iteration~$\ell+1$} and $\gamma_{\text{Req.}}^{(\ell)} := \left\lceil (k - \sum_{i=1}^{\ell} l(\bm{b}^{(i)}) )/ {\mathsf{R}_{\mathcal{C}_2}} \right\rceil$ is the number of required symbol slots in case we switch to $\mathcal{C}_2$ at iteration~$\ell+1$, respectively.
Codebook $\mathcal{C}_1$ is used for encoding at iteration $\ell$ if condition \eqref{eqn:cond_no_overflow} is fulfilled, otherwise $\mathcal{C}_2$ from iteration $\ell$ onwards.
%If Eq.~\eqref{eqn:cond_no_overflow} is satisfied, then we can use $\mathcal{C}_1$ at iteration~$\ell$ without concerning overflow; otherwise, we should avoid overflow by switching to $\mathcal{C}_2$ at iteration $\ell$, where avoidance of overflow by this switching has been ensured by assumption from the previous iteration $\ell-1$.
Notice that \eqref{eqn:cond_no_overflow} can be evaluated only after seeing the incoming bits at iteration~$\ell$ to obtain $l(\bm{b}^{(\ell)})$ and $l(\bm{x}^{(\ell)})$.
This makes unique decoding impossible, since, assuming that unique decoding was successfully performed until iteration~$\ell-1$ such that $l(\bm{b}^{(i)})$ and $l(\bm{x}^{(i)})$ are known for all $i=1,\ldots,\ell-1$, the decoder cannot identify which codebook was used at iteration~$\ell$ without knowing $l(\bm{b}^{(\ell)})$ and $l(\bm{x}^{(\ell)})$.
This suggests that, for unique decoding, the codebook used at iteration~$\ell$ must be identified without relying on $l(\bm{b}^{(\ell)})$ and $l(\bm{x}^{(\ell)})$, which can be realized by a bounding technique.
Namely, in a \emph{pessimistic} assumption that the shortest information word is mapped to the longest codeword of $\mathcal{C}_1$ at iteration $\ell$, we can find the lower bound of the available slots and the upper bound of the required slots as
\begin{align*}
\underline{\gamma}_{\text{Ava}}^{(\ell)}   &:= n - \sum_{i=1}^{\ell-1} l(\bm{x}^{(i)}) - l_{\max}(\bmcal{X}),			\eqnum \label{eqn:gamma_ava2}
\end{align*}
and
\begin{align*}
\overline{\gamma}_{\text{Req}}^{(\ell)}    &:= \left\lceil \frac { k - \sum_{i=1}^{\ell-1} l(\bm{b}^{(i)}) - l_{\min}(\bmcal{B}) } {\mathsf{R}_{\mathcal{C}_2}} \right\rceil,  \eqnum \label{eqn:gamma_req2}
\end{align*}
where $l_{\max}(\bmcal{X}) := \max_{\bm{x}\in \bmcal{X}} l(\bm{x})$ and $l_{\min}(\bmcal{B}) := \max_{\bm{b}\in \bmcal{B}} l(\bm{b})$.
Now, without knowledge of $l(\bm{b}^{(\ell)})$ and $l(\bm{x}^{(\ell)})$, the condition~\eqref{eqn:cond_no_overflow} can be conservatively examined by
\begin{align}
\underline{\gamma}_{\text{Ava}}^{(\ell)} \geq \overline{\gamma}_{\text{Req}}^{(\ell)},    \label{eqn:cond_no_overflow2}
\end{align}
since $\gamma_{\text{Ava}}^{(\ell)} \geq \underline{\gamma}_{\text{Ava}}^{(\ell)}$ and $\overline{\gamma}_{\text{Req}}^{(\ell)} \geq \gamma_{\text{Req}}^{(\ell)}$.
%%%
\begin{thm}\label{thm_cond_overflow}
\label{thm:no_overflow}
In the successive PCDM encoding process, an overflow can be avoided by switching the code from $\mathcal{C}_1$ to $\mathcal{C}_2$ at the earliest iteration $\ell$ that does not fulfill \eqref{eqn:cond_no_overflow2}.
\end{thm}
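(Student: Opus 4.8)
The plan is to show that the decoder-computable test~\eqref{eqn:cond_no_overflow2} implies the non-causal overflow-safety condition~\eqref{eqn:cond_no_overflow}, and then to invoke this at the switching iteration. For iterations in which $\mathcal{C}_1$ is used, abbreviate $S_\ell := \sum_{i=1}^{\ell} l(\bm{x}^{(i)})$ and $B_\ell := \sum_{i=1}^{\ell} l(\bm{b}^{(i)})$, so that $\gamma_{\text{Ava}}^{(\ell)} = n - S_\ell$, $\gamma_{\text{Req}}^{(\ell)} = \lceil (k - B_\ell)/\mathsf{R}_{\mathcal{C}_2}\rceil$, and~\eqref{eqn:cond_no_overflow} reads $n - S_\ell \ge \lceil (k - B_\ell)/\mathsf{R}_{\mathcal{C}_2}\rceil$. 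First I would use the two bounds already observed after~\eqref{eqn:cond_no_overflow2}, namely $\gamma_{\text{Ava}}^{(\ell)} \ge \underline{\gamma}_{\text{Ava}}^{(\ell)}$ and $\overline{\gamma}_{\text{Req}}^{(\ell)} \ge \gamma_{\text{Req}}^{(\ell)}$, which follow from $l(\bm{x}^{(\ell)}) \le l_{\max}(\bmcal{X})$, from $l(\bm{b}^{(\ell)}) \ge l_{\min}(\bmcal{B})$, and from monotonicity of $\lceil\cdot\rceil$ with $\mathsf{R}_{\mathcal{C}_2} > 0$. Chaining $\gamma_{\text{Ava}}^{(\ell)} \ge \underline{\gamma}_{\text{Ava}}^{(\ell)} \ge \overline{\gamma}_{\text{Req}}^{(\ell)} \ge \gamma_{\text{Req}}^{(\ell)}$ gives the key implication: if~\eqref{eqn:cond_no_overflow2} holds at iteration~$\ell$ (so that $\mathcal{C}_1$ is used there), then~\eqref{eqn:cond_no_overflow} holds through iteration~$\ell$. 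The index shift is what makes this a genuine step: the left-hand quantities of~\eqref{eqn:cond_no_overflow2} depend only on iterations $1,\dots,\ell-1$, hence are available to the decoder before iteration~$\ell$, whereas~\eqref{eqn:cond_no_overflow} involves the state after iteration~$\ell$.

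Next, let $\ell^{\star}$ be the earliest iteration at which~\eqref{eqn:cond_no_overflow2} fails --- the switching point --- and split into cases. If $\ell^{\star} > 1$, then~\eqref{eqn:cond_no_overflow2} held at iteration $\ell^{\star} - 1$, so the implication gives $n - S_{\ell^{\star}-1} \ge \gamma_{\text{Req}}^{(\ell^{\star}-1)}$; since the switch occurs at a codeword boundary, the residual $k - B_{\ell^{\star}-1}$ bits are all encoded by $\mathcal{C}_2$, requiring by definition exactly $\gamma_{\text{Req}}^{(\ell^{\star}-1)}$ symbol slots, so the total is $S_{\ell^{\star}-1} + \gamma_{\text{Req}}^{(\ell^{\star}-1)} \le n$ --- no overflow. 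If $\ell^{\star} = 1$, the encoder uses $\mathcal{C}_2$ from the outset and needs $\lceil k/\mathsf{R}_{\mathcal{C}_2}\rceil$ slots, at most $n$ because $k/n = R_{\text{Frame}} < \mathsf{R}_{\mathcal{C}_2}$ by~\eqref{eqn:Rframe1} forces $k/\mathsf{R}_{\mathcal{C}_2} < n$ with $n$ integral. Finally, if~\eqref{eqn:cond_no_overflow2} never fails, $\mathcal{C}_1$ is used until all $k$ bits are consumed; applying the implication to that terminal iteration, where $k - B_\ell = 0$, yields $n - S_\ell \ge 0$, again no overflow. This covers every case.

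I do not foresee a deep obstacle --- the argument is a short chain of inequalities. What needs care is the index bookkeeping noted above, the correct orientation of the two bounds in the first step, and the fact that ``switching to $\mathcal{C}_2$ at iteration $\ell^{\star}$'' realizes exactly $\gamma_{\text{Req}}^{(\ell^{\star}-1)}$ output symbols for the residual bits, which is where the ceiling in the definition of $\gamma_{\text{Req}}$ (padding of $\mathcal{C}_2$'s last block) enters. I would also remark that this statement concerns overflow only; the point of working with the pessimistic bounds~\eqref{eqn:gamma_ava2}--\eqref{eqn:gamma_req2} rather than~\eqref{eqn:cond_no_overflow} directly is decodability --- the decoder must be able to replay the switching decision --- at the price of a slightly earlier switch, and hence a small energy penalty.
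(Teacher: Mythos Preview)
Your proof is correct and follows essentially the same approach as the paper: both arguments rest on the implication \eqref{eqn:cond_no_overflow2}\,$\Rightarrow$\,\eqref{eqn:cond_no_overflow} (which the paper records just before the theorem as $\gamma_{\text{Ava}}^{(\ell)} \ge \underline{\gamma}_{\text{Ava}}^{(\ell)}$ and $\overline{\gamma}_{\text{Req}}^{(\ell)} \ge \gamma_{\text{Req}}^{(\ell)}$) applied at the iteration preceding the switch, together with the $\ell^\star=1$ case handled via~\eqref{eqn:Rframe1}. The only cosmetic difference is that the paper phrases the argument as an induction on~$\ell$ while you organize it as a case split on~$\ell^\star$; you are also more explicit than the paper in treating the ``never fails'' case.
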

%%%
\begin{proof}%[\unskip\nopunct]
If \eqref{eqn:cond_no_overflow2} is fulfilled at iteration $\ell = 1$, we can use $\mathcal{C}_1$ at $\ell = 1$ without an overflow, otherwise we can use $\mathcal{C}_2$ to encode all the information bits without an overflow, since $\lceil k / \mathsf{R}_{\mathcal{C}_2} \rceil \leq n$ by \eqref{eqn:Rframe1}.
Assume that $\mathcal{C}_1$ was used at iteration $\ell-1$, fulfilling \eqref{eqn:cond_no_overflow2}, hence ensured that an overflow will not occur if we switch to $\mathcal{C}_2$ at iteration $\ell$.
If \eqref{eqn:cond_no_overflow2} is fulfilled also at iteration $\ell$, we keep using $\mathcal{C}_1$, since encoding with $\mathcal{C}_2$ from iteration $\ell+1$ suffices to avoid an overflow, otherwise, we switch to $\mathcal{C}_2$ at iteration $\ell$ without an overflow by assumption. This completes proof by mathematical induction.
\end{proof}
%%%

%%%
\begin{thm}\label{thm_unique_decoding}
The codewords framed by using Theorem~\ref{thm:no_overflow} can be uniquely decoded.
\end{thm}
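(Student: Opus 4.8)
The plan is to show that the decoder can reconstruct, iteration by iteration, exactly the same sequence of codebook choices that the encoder made, and therefore parse the code symbols correctly. The crucial observation is that the switching test \eqref{eqn:cond_no_overflow2} at iteration $\ell$ depends only on quantities known \emph{before} iteration $\ell$ is processed: the partial sums $\sum_{i=1}^{\ell-1} l(\bm{x}^{(i)})$ and $\sum_{i=1}^{\ell-1} l(\bm{b}^{(i)})$, together with the fixed constants $n$, $k$, $l_{\max}(\bmcal{X})$, $l_{\min}(\bmcal{B})$, and $\mathsf{R}_{\mathcal{C}_2}$. This is precisely why the bounding technique of \eqref{eqn:gamma_ava2}--\eqref{eqn:gamma_req2} was introduced in place of the exact test \eqref{eqn:cond_no_overflow}, and it is the pivot of the whole argument.

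First I would set up an induction on the iteration index $\ell$. The induction hypothesis is that the decoder has correctly recovered $\bm{b}^{(i)}$ and $\bm{x}^{(i)}$ for all $i < \ell$ and, in particular, knows the partial sums $\sum_{i=1}^{\ell-1} l(\bm{x}^{(i)})$ and $\sum_{i=1}^{\ell-1} l(\bm{b}^{(i)})$ and which codebook was in force through iteration $\ell-1$. For the base case $\ell = 1$ the partial sums are zero and both encoder and decoder evaluate \eqref{eqn:cond_no_overflow2} from constants alone, so they agree on whether $\mathcal{C}_1$ or $\mathcal{C}_2$ is used at the first iteration. For the inductive step, if the code was already $\mathcal{C}_2$ from some earlier iteration onward, there is nothing to decide and the decoder simply continues parsing with the fixed-length F2F mapper $\mathcal{C}_2$; otherwise $\mathcal{C}_1$ was in force at $\ell-1$, the decoder has the same partial sums the encoder had, it recomputes $\underline{\gamma}_{\text{Ava}}^{(\ell)}$ and $\overline{\gamma}_{\text{Req}}^{(\ell)}$ by \eqref{eqn:gamma_ava2} and \eqref{eqn:gamma_req2}, and checks \eqref{eqn:cond_no_overflow2}. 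Because this check uses only information the decoder already possesses, the decoder arrives at the same verdict as the encoder: it uses $\mathcal{C}_1$ to parse the next codeword if \eqref{eqn:cond_no_overflow2} holds, and switches to $\mathcal{C}_2$ otherwise. In either case the decoder now knows which codebook to use at iteration $\ell$; since both $\mathcal{C}_1$ and $\mathcal{C}_2$ are prefix-free (hence instantaneously decodable), it reads off $\bm{x}^{(\ell)}$ unambiguously from the incoming code symbols, inverts the bijection to get $\bm{b}^{(\ell)}$, updates the two partial sums, and the induction closes. Finally, Theorem~\ref{thm:no_overflow} guarantees that the encoding of all $k$ information bits terminates within the $n$ symbol slots, so the decoder's parse consumes exactly the frame and the process halts consistently.

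I expect the main subtlety to be purely bookkeeping rather than a deep obstacle: one must make sure that ``everything the switching test needs is available to the decoder at the right moment'' really does hold, i.e.\ that \eqref{eqn:cond_no_overflow2} never secretly depends on $l(\bm{b}^{(\ell)})$ or $l(\bm{x}^{(\ell)})$ — which is exactly the content of the inequalities $\gamma_{\text{Ava}}^{(\ell)} \geq \underline{\gamma}_{\text{Ava}}^{(\ell)}$ and $\overline{\gamma}_{\text{Req}}^{(\ell)} \geq \gamma_{\text{Req}}^{(\ell)}$ already established just before the theorem. A second point worth a sentence is the boundary behaviour: after the switch to $\mathcal{C}_2$, the residual bit count $k - \sum_{i} l(\bm{b}^{(i)})$ need not be a multiple of $\mathsf{R}_{\mathcal{C}_2}=\log_2 M$, so the last $\mathcal{C}_2$-codeword may be padded; the padding length is deterministic given the partial sums, so the decoder can strip it, and this does not affect unique decodability. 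Beyond these two remarks the proof is a straightforward induction mirroring the encoder's decision logic.
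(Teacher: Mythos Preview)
Your proposal is correct and follows essentially the same inductive argument as the paper's own proof: establish the base case at $\ell=1$, assume the decoder knows $l(\bm{b}^{(i)})$ and $l(\bm{x}^{(i)})$ for $i<\ell$, recompute \eqref{eqn:cond_no_overflow2} to identify the codebook at iteration $\ell$, and decode instantaneously by the prefix-free property. Your write-up is more explicit about the role of the prefix-free property and the boundary bookkeeping (padding after the switch to $\mathcal{C}_2$), but the structure and key idea coincide with the paper's proof.
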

%%%
\begin{proof}%[\unskip\nopunct]
At iteration $\ell = 1$, it is trivial to see that a decoder can identify a decoding code using \eqref{eqn:cond_no_overflow2}, which allows for unique decoding and gives knowledge of $l(\bm{b}^{(1)})$ and $l(\bm{x}^{(1)})$.
Assume that $l(\bm{b}^{(i)})$ and $l(\bm{x}^{(i)})$ for all $i=1,\ldots,\ell-1$ are known at the beginning of iteration $\ell$. Then, by assessing \eqref{eqn:cond_no_overflow2} at iteration $\ell$, we identify the decoding code, enabling unique decoding at iteration $\ell$. This provides $l(\bm{b}^{(\ell)})$ and $l(\bm{x}^{(\ell)})$ and completes proof by mathematical induction.
\end{proof}

%%%%%%%%%%%%%%%%%%
\begin{algorithm}[!t]
\label{alg:framing}
\caption{Framing of a prefix-free code}
\begin{algorithmic}[1]
\renewcommand{\algorithmicrequire}{\textbf{Input:}}
\renewcommand{\algorithmicensure}{\textbf{Output:}}
\REQUIRE $\mathcal{C}_1,\mathcal{C}_2$, an information bit sequence $\bm{b} \in \mathcal{B}^k$
\ENSURE  A code symbol frame $\bm{x} \in \mathcal{X}_{M\text{-ASK}}^n$
\\ \textit{Initialization}: $\ell\leftarrow0$.
\REPEAT
	\STATE $\ell \leftarrow \ell + 1$
	\STATE Find $\bm{b}^{(\ell)}\in\bmcal{B}_1$ and produce the corresponding $\bm{x}^{(\ell)}\in\bmcal{X}_1$ as an output.
%		Encode the current length-$l(\bm{b}^{(\ell)})$ segment of $\bm{b}$ using $\mathcal{C}_1$ and produce the length-$l(\bm{x}^{(\ell)})$ segment of $\bm{x}$.
%		\STATE Encode all the remaining bits in $\bm{b}$ using $\mathcal{C}_2$ to fill $\bm{x}$.
\UNTIL{$\sum_{i=1}^{\ell}l(\bm{b}^{(i)}) <= k$ and $\underline{\gamma}_{\text{Ava}}^{(\ell)} \geq \overline{\gamma}_{\text{Req}}^{(\ell)}$}
\IF{$\sum_{i=1}^{\ell}l(\bm{b}^{(i)}) < k$ }
	\REPEAT
		\STATE $\ell \leftarrow \ell + 1$
		\STATE Find $\bm{b}^{(\ell)}\in\bmcal{B}_2$ and produce the corresponding $\bm{x}^{(\ell)}\in\bmcal{X}_2$ as an output.
	\UNTIL{$\sum_{i=1}^{\ell}l(\bm{b}^{(i)}) <= k$}
\ENDIF
\IF{$\sum_{i=1}^{\ell}l(\bm{x}^{(i)}) < n$}
	\STATE Fill the last $n-\sum_{i=1}^{\ell}l(\bm{x}^{(i)})$ slots in $\bm{x}$ with 1.
\ENDIF
\RETURN $\bm{x}$
\end{algorithmic}
\end{algorithm}
%%%%%%%%%%%%%%%%%%

In cases where an \emph{underflow} occurs, the encoder can simply fill the unoccupied slots in the frame with dummy symbols of the smallest energy, e.g., $\mathcal{X}_1 = 1$ for the $M$-ASK alphabet.
The decoder can discard these dummy symbols after $k$ bits are all decoded.
There may also be incidents at the end of encoding that need minor manipulations as follows.
At the last encoding iteration, it is possible that there is no information word in the dictionary that matches the input bits.
In this case, there must be multiple information words whose prefix matches the input bits, hence an encoder can set a rule for unique encoding; e.g., the first codeword in an lexicographical order can be picked.
Unique decoding is straight-forward. 
A pseudo-code for this framing algorithm is given in Algorithm~2.

\subsection{Analysis of Fixed-Length Penalty by Gaussian Approximation}

The probability that the codeword resolution rate at a certain time instant equals $r_n := l(\bm{b}_n)) / l(\bm{x}_n)$ is the probability that the current output symbol belongs to the codeword $\bm{x}_n \in {\bmcal{X}_1}$, which can be calculated as $q_n := p_n l(\bm{x}_n) / \sum_{i=1}^{N} p_i l(\bm{x}_i)$, where $N = |\bmcal{X}|$.
Without the fixed-length framing constraint, the resolution rate~\eqref{eqn:R} can alternatively be calculated as $\mathsf{R}_{\mathcal{C}_1} = \mathbb{E}_{\bf{q}}(R)$, where $R$ is a random variable taking values on an ordered set $\{r_1, \ldots, r_N \}$, with the PMF $\bm{q} := [q_1,\ldots,q_N]^T$.
For example, with the V2V code in Tab.~\ref{tab:examples}~(c), a codeword resolution rate in $\bm{r} := [r_1,\ldots,r_N]^T \approx [$0.14, 0.43, 0.5, 0.6, 1, 1.67, 3, 6$]^T$ is observed at the encoder output with a PMF $\bm{q} \approx [$0.57, 0.143, 0.122, 0.102, 0.041, 0.015, 0.005, 0.003$]^T$, which yields $\mathsf{R}_{\mathcal{C}_1} \approx 0.361$ on average. 
Suppose now that a fixed-length framing constraint is imposed.
We consider the \emph{ensemble} of frames and assume that all symbols encoded by $\mathcal{C}_1$ are IID.
Also, though the true symbol resolution rate is a discrete random variable, we approximate this by a continuous Gaussian random variable with mean $\mathsf{R}_{\mathcal{C}_1}$ and variance $\mathsf{S}^2 := \mathbb{E}_{\bm{q}} (R^2) - \mathsf{R}_{\mathcal{C}_1}^2$; i.e., $R \sim \mathcal{N} (\mathsf{R}_{\mathcal{C}_1},\mathsf{S}^2)$.
In this case, the number of information bits that are mapped until the $t$-th symbol output follows a Gaussian distribution $\mathcal{N} (t\mathsf{R}_{\mathcal{C}_1},t\mathsf{S}^2)$, if there has been neither an overflow prediction nor an early termination of encoding by $\mathcal{C}_1$ until the $(t-1)$-th symbol.
To take into account a fulfilled overflow prediction and an early termination, we notice that \eqref{eqn:gamma_ava2} and \eqref{eqn:gamma_req2} at the $t$-th output symbol can respectively be transformed into $\underline{\gamma}_{\text{Ava}} (t)   := n - (t-1) - l_{\max}(\bmcal{X})$ and $\overline{\gamma}_{\text{Req}} (t) := \left\lceil { \big( k-\Theta(t-1) - l_{\min}(\bmcal{B}) \big) } / {\mathsf{R}_{\mathcal{C}_2}} \right\rceil$, where $\Theta(t)$ is a random variable representing the \emph{cumulative symbol resolution rate} at the $t$-th output symbol.

%%%%%%
\begin{figure}[!t]
\centering
\hspace*{-2em}\includegraphics[width=2.3in]{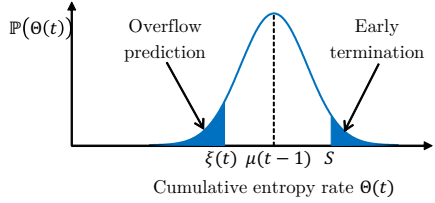}\\[-1em]
%\hspace*{-2em}\includegraphics[width=2.3in]{fig_ga_illustration_cr.pdf}\\[-1em]
\caption{Approximated Gaussian distribution of the cumulative symbol resolution} rate at the $t$-th symbol.
\label{fig:ga_illustration}
\end{figure}
%%%%%%

However, characterization of $\Theta(t)$ becomes infeasible as the overflow and early termination probabilities increase with $t$, hence we approximate $\Theta(t)$ for all $t = 1,\ldots,n$ by a Gaussian random variable with mean $\mu(t)$ and variance $\sigma^2(t)$ such that $\Theta(t) \sim \mathcal{N} (\mu(t), \sigma^2(t))$ whose evolution over $t$ is mathematically tractable, as shown in Fig.~\ref{fig:ga_illustration}.
Then, on condition that $\mathcal{C}_1$ is used at the $t$-th symbol, i.e., on condition that neither an overflow prediction nor an early termination has been made before the $t$-th symbol, the probability of an overflow prediction at the $t$-th symbol is
\begin{align*}
&\mathbb{P} \left( \overline{\gamma}_{\text{Req}} (t) > \underline{\gamma}_{\text{Ava}} (t)   \right) \\
&\approx \mathbb{P} \left(   k -\Theta(t-1) - l_{\min}(\bmcal{B}) - \mathsf{R}_{\mathcal{C}_2} \left( n - t+1 - l_{\max}(\bmcal{X}) \right) > 0 \right) \\
&= \mathbb{P} \left( \Theta(t-1)  < \xi(t)  \right)			\eqnum \label{eqn:overflow_threshold} \\
&= F\left( \xi(t) \,|\,\mu(t-1),\sigma^2(t-1)\right)		\eqnum \label{eqn:ga_overflow_condition}
\end{align*}
where, in \eqref{eqn:overflow_threshold}, an overflow threshold is defined as $\xi(t) := \mathsf{R}_{\mathcal{C}_2} \left( t - n - 1 + l_{\max}(\bmcal{X}) \right) + k - l_{\min}(\bmcal{B})$, and $F\left( \,\cdot \,|\,\mu,\sigma^2\right)$ in \eqref{eqn:ga_overflow_condition} denotes Gaussian cumulative distribution function (CDF) with mean $\mu$ and variance $\sigma^2$.
Equation~\eqref{eqn:overflow_threshold} gives insight into the behavior of the proposed framing method, since it indicates that an overflow is predicted if the symbol resolution rate accumulated until the $(t-1)$-th symbol is smaller than the threshold $\xi(t)$ that increases linearly with $t$.
Here, $\mu(t)$ and $\xi(t)$ are both linear in $t$ with slopes $\mathsf{R}_{\mathcal{C}_1}$ and $\mathsf{R}_{\mathcal{C}_2}$, respectively, and since $\mathsf{R}_{\mathcal{C}_1} < \mathsf{R}_{\mathcal{C}_2}$ by the framing rule, the probability of an overflow prediction in \eqref{eqn:overflow_threshold} gradually increases with $t$.

%%%%%%
\begin{figure}[!t]
\centering
\includegraphics[width=3.4in]{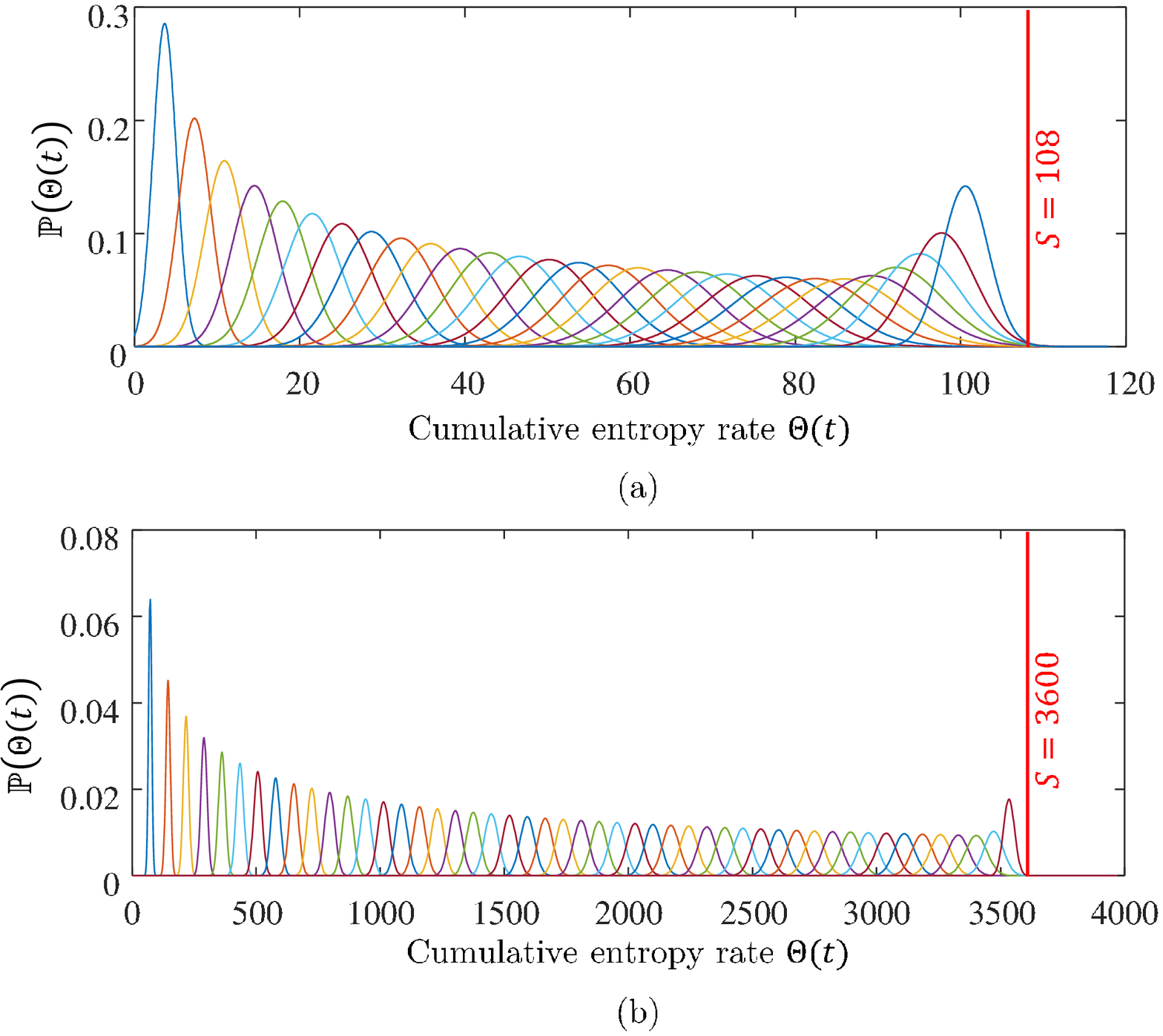}
\caption{Evolution of the distribution of the cumulative symbol resolution rate: (a) $k = 108, n = 300, t = 10i$, and (b) $k = 3600, n = 10000, t = 200i$, for positive integers $i$.}
\label{fig:ga_evolution}
\end{figure}
%%%%%%

%%%%%%
\begin{figure}[!t]
\centering
\subfloat[][]{\!\!\!\!\!\includegraphics[width=1.12\linewidth]{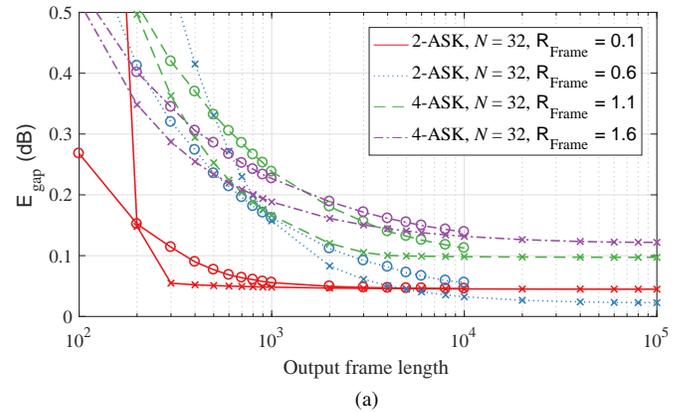}\label{fig:fixed_length_penalty1}}\\[-0.1em]
\subfloat[][]{\!\!\!\!\!\includegraphics[width=1.12\linewidth]{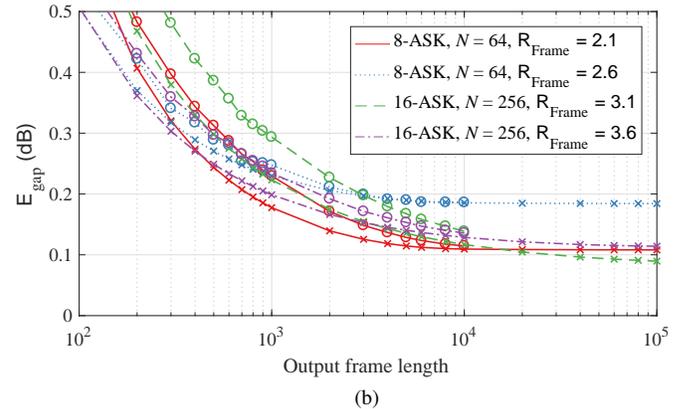}\label{fig:fixed_length_penalty2}}
%\subfloat[][]{\includegraphics[width=0.5\linewidth]{fig_fixed_length_penalty1_cr.pdf}\label{fig:fixed_length_penalty1}}
%\subfloat[][]{\includegraphics[width=0.5\linewidth]{fig_fixed_length_penalty2_cr.pdf}\label{fig:fixed_length_penalty2}}\\[-0.5em]
\caption{Energy gap after framing of (a) V2V codes for the 2- and 4-ASK alphabets, and (b) V2F codes for the 8- and 16-ASK alphabets. The cross and circle markers show the results of GA analysis and MC simulations, respectively.}% 
\label{fig:fixed_length_penalty}
\end{figure}
%%%%%%

In order to take into account the history of previous overflow predictions and early terminations, let $\Phi_{\rm{Swi}}(t)$ and $\Phi_{\rm{End}}(t)$ denote the cumulative overflow prediction probability and cumulative early termination probability at symbol~$t$.
Then, the \emph{unconditional} probability that an overflow is predicted at symbol~$t$ is 
\begin{align*}
\phi_{\rm{Swi}}(t) := \big(1 - \Phi_{\rm{Swi}}(t-1) - \Phi_{\rm{End}}(t-1)\big) \\
				     \times F\left(\xi(t) \,|\, \mu(t-1),\sigma^2(t-1)\right). 		\eqnum \label{eqn:ga_overflow_pr}
\end{align*}
And the probability of an early termination at symbol~$t$ is 
\begin{align*}
\phi_{\rm{End}}(t) := \big(1 - \Phi_{\rm{Swi}}(t-1) - \Phi_{\rm{End}}(t-1) \big) \\
				    \times \left( 1- F(S \,|\, \mu(t-1),\sigma^2(t-1)) \right).			\eqnum \label{eqn:ga_early_termination_pr}
\end{align*}
It is straightforward to see that their cumulative probabilities are obtained by $\Phi_{\rm{Swi}}(t) = \sum_{i=1}^{t} \phi_{\rm{Swi}}(t)$ and $\Phi_{\rm{End}}(t) = \sum_{i=1}^{t} \phi_{\rm{End}}(t)$.
From the initial conditions $\Phi_{\rm{Swi}}(0) = 0$ and $\Phi_{\rm{End}}(0) = 0$, the cumulative probabilities can be evaluated in an iterative manner from symbol $1$.
When evaluating \eqref{eqn:ga_overflow_pr} and \eqref{eqn:ga_early_termination_pr}, we have the initial conditions $\mu(0) = 0$ and $\sigma^2(0) = 0$, and $\mu(t)$ and $\sigma^2(t)$ for $t > 0$ can be obtained using the update rule (see Fig.~\ref{fig:ga_illustration}) $\mu(t) = \int_{\xi(t-1)}^{k} \theta \frac {f\left(\theta \,|\, \mu(t-1),\sigma^2(t-1)\right)}{\alpha} d\theta + \mathsf{R}_{\mathcal{C}_1}$, and $\sigma^2(t) = \int_{\xi(t-1)}^{k} (\theta-\mu(t))^2 \frac{f\left(\theta \,|\, \mu(t-1),\sigma^2(t-1)\right)}{\alpha} d\theta + \mathsf{S}^2$, where $f\left( \,\cdot \,|\,\mu,\sigma^2\right)$ denotes Gaussian probability density function (PDF) with mean $\mu$ and variance $\sigma^2$, and $\alpha$ is a normalization factor defined by $\alpha := \int_{\xi(t-1)}^{k} f\left(\theta \,|\, \mu(t-1),\sigma^2(t-1)\right) d\theta$.
$\mu(t)$ and $\sigma^2(t)$ can be efficiently evaluated by numerical integration methods \cite{davis2007methods}.

%%%%%%%
\begin{figure*}[!t]
\centering
\captionsetup{skip=-5pt}
\includegraphics[width=0.81\linewidth]{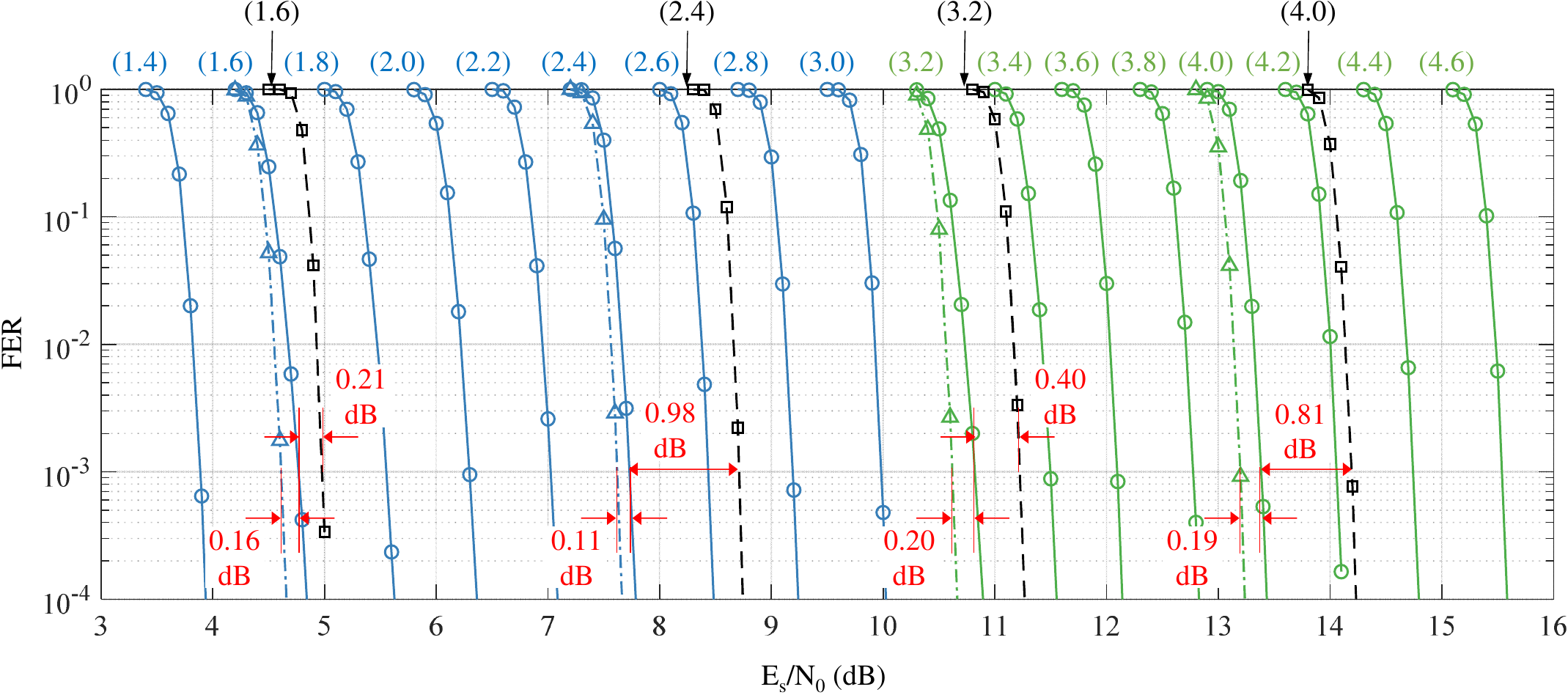}\\[-0.1em]
\caption{FER of uniform QAM (square markers) and PS-QAM realized by CCDM (triangle markers), and PCDM (circle markers). The numbers in the parentheses show the IRs.}
\label{fig:snr_fer}
\end{figure*}
%%%%%%%

Figure~\ref{fig:ga_evolution} shows the evolution of $\Theta(t)$, when we use the aforementioned Gaussian approximation (GA) for the V2V code in Tab.~\ref{tab:examples}~(c).
It realizes $\mathsf{R}_{\text{Frame}} = 0.36$ in each frame, using the probabilistic symbol resolution rate $R$ with mean $\mathsf{R}_{\mathcal{C}_1} \approx 0.361$ and variance $\mathsf{S}^2 = 0.195$.
Each small Gaussian PDF represents a PDF of the cumulative symbol resolution rate at the output symbol $t$, where $t$ increments from 10 in a step size of 10 in~(a), and from 200 in a step size of 200 in~(b).
The variance of the Gaussian PDF grows with $t$ at first, which is expected with little effect from the fixed-length constraint, then begins to decrease near the end of the frame as the probabilities of an overflow prediction and an early termination emerge.
Comparison of Figs.~\ref{fig:ga_evolution}~(a) and (b) shows that a large frame length reduces the impact of framing on the increase of the energy gap, as will be quantitatively shown below.

The average symbol energy under framing can be estimated by GA as
\begin{align*}
%\mathsf{E}_{\text{Frame}} = \frac{1}{T}\sum_{t=1}^{T} &\Big[  \left( 1- \Phi_{\rm{Swi}}(t) - \Phi_{\rm{End}}(t-1) \right) \mathsf{E}_1 \\
%& + \Phi_{\rm{Swi}}(t) \mathsf{E}_2 + \Phi_{\rm{End}}(t-1) \Big],  \eqnum \label{eqn:E_frame}
&\mathsf{E}_{\text{Frame}} = \frac{1}{n}\sum_{t=1}^{n} \Big[  \left( 1- \Phi_{\rm{Swi}}(t) - \Phi_{\rm{End}}(t-1) \right)  
\mathsf{E}_{\mathcal{C}_1}  \\
&\qquad\qquad\qquad + \Phi_{\rm{Swi}}(t) \mathsf{E}_{\mathcal{C}_2} + \Phi_{\rm{End}}(t-1) \Big].  \eqnum \label{eqn:E_frame}
\end{align*}
The energy gap of the prefix-free codes is depicted in Fig.~\ref{fig:fixed_length_penalty}, where the codes are selected to support various $\mathsf{R}_{\text{Frame}}$, ranging from 0.1 to 3.6 in a step size of 0.5.
The energy gap is estimated by GA, and also by Monte Carlo (MC) simulations, averaged over 10000 frames created from random equiprobable information bits.
GA provides very close results to MC simulations as the frame length increases, in a significantly shorter time.
Since a smaller frame length allows a greater flexibility in implementation of the PCDM and parallel processing of multiple frames, Fig.~\ref{fig:fixed_length_penalty} shows there exists a trade-off between the frame length and the energy gap. 
In the limit of the frame length, prefix-free codes approach the ideal energy efficiency of the MB PMFs to within 0.2~dB, even with framing.

\section{Rate-Adaptable PCDM in AWGN Channel}

Figure~\ref{fig:snr_fer} shows the performance of rate-adaptable PCS in the AWGN channel, realized by PCDM in the PAS architecture.
PCDM is implemented with frame lengths of $30720/\log_2 M$ QAM symbols, and a tail-biting (TB) spatially-coupled low-density parity-check (SC-LDPC) code of length 30720 bits and rate $\mathsf{r}_c = 4/5$ is used for FEC, constructed based on \cite{cho2015construction}.
In the simulation, one PCDM frame is made to be exactly one FEC frame, hence the frame error rate (FER) is the same for PCDM decoding and FEC decoding.
The TB-SC-LDPC code is 0.9~dB away from the binary phase-shift keying (BPSK) AWGN capacity at the FER of $10^{-3}$, when decoded by the normalized min-sum algorithm with $\leq$ 30 ierations.
V2V prefix-codes are used for PCDM, which realize the DM rates $\mathsf{R}_{\text{Frame}} = 0.1, 0.2, \ldots, 0.9$ for the $2$-ASK alphabet and $\mathsf{R}_{\text{Frame}} = 1.2, 1.3, \ldots, 1.9$ for the $4$-ASK alphabet.
Each V2V code has only 32 codewords in the codebook.
The IR of probabilistically shaped (PS)-$M$-QAM is given by $\mathsf{R}_{\text{PS-}M\text{-QAM}} = 2 \left[ 1+\mathsf{R}_{\text{Frame}}-(1-\mathsf{r}_c)\frac{\log_2 M}{2} \right]$, where $\mathsf{r}_c$ denotes the FEC code rate\cite[eq.~(30)]{Bocherer15}.
Note that $\mathsf{R}_{\text{PS-}M\text{-QAM}}$ is quantified in bits/QAM symbol, whereas all the resolution rates in this paper are in bits/ASK symbol.
Using the selected V2V codes and $\mathsf{r}_c = 4/5$, the IRs of $\mathsf{R}_{\text{PS-}16\text{-QAM}} = 1.4, 1.6, \ldots, 3.0$ and $\mathsf{R}_{\text{PS-}64\text{-QAM}} = 3.2, 3.4, \ldots, 4.6$ are obtained, as shown by the parentheses in Fig.~\ref{fig:snr_fer}.
The FER is estimated as a function of $\mathsf{E_s/N_0}$, with $\mathsf{E_s}$ being energy per channel use and $N_0$ being noise variance per two dimensions.
For comparison, also shwon is the performance of the uniform 4-, 8-, 16-, and 32-QAM.
The constellations of the uniform 4- and 16-QAM are the square QAM constellations with Gray labeling, and those of the uniform 8- and 32-QAM are `Circular-8QAM' of \cite[Ch.~3.1.1]{Muller16advanced} and $\bm{X}_{32,\text{cro}}$ of \cite[Ch.~3.2.4]{Muhammad10Coding}, respectively.
It can be seen that PCDM approaches the performance of CCDM to within 0.2~dB at an FER of $10^{-3}$.
Note that, with the chosen frame lengths, CCDM is very close to an ideal DM \cite{Schulte16}.
As a result, PCDM achieves shaping gains up to 0.98~dB over the uniform QAM for the same IR.

\section{Conclusion}

We created a wide range of resolution rates with a very fine granularity using tiny prefix-free codes.
In particular, we constructed optimal V2F codes of cardinality $\leq$ 4096 for the 2-, 4-, 8-, and 16-ASK alphabets, optimal in the sense that no other code of the same cardinality can achieve a smaller average symbol energy for the same resolution rate.
We also enumerated all of the optimal F2V codes and many near-optimal V2V codes of cardinality $\leq$ 32 with the 2- and 4-ASK alphabets.
Selected codes approach the theoretic lowest energy to within 0.13 dB across the resolution rate from 0.1 to 3.9~bits/symbol in a step size less than 0.16~bits/symbol, showing that PCDM is suitable for rate-adaptable PCS.
Its implementation cost is a small look-up table for each resolution rate.
We also proposed a framing method to implement fixed-rate transmission with variable-rate prefix-free coding, and proved that the proposed framing enables unique prefix-free decoding.
We gave insights into the evolution of encoding process under framing.
Using GA analysis and MC simulations, we showed that the penalty caused by framing is negligible when the frame length is large.
FEC simulations in the AWGN channel demonstrate that the PCDM-based PS-QAM under framing achieves an SNR gain up to 0.98~dB compared to uniform QAM.

\end{document}